\newtheorem{theorem}{Theorem}[section]
\newtheorem{proposition}[theorem]{Proposition}
\newtheorem{lemma}[theorem]{Lemma}
\newtheorem{corollary}[theorem]{Corollary}
\newtheorem{definition}[theorem]{Definition}
\newcommand{\R}{\ensuremath{\mathbb{R}}}
\newcommand{\Z}{\ensuremath{\mathbb{Z}}}
 \newcommand{\eps}{\varepsilon} 
\renewcommand{\epsilon}{\varepsilon}
\renewcommand{\vec}[1]{\ensuremath{\mathbf{#1}}}
\newcommand{\basis}{\ensuremath{\mathbf{B}}}
\newcommand{\problem}[1]{\mathrm{#1}}
\newcommand{\poly}{\mathrm{poly}}
\newcommand{\scarequotes}[1]{``#1''}
\def\imod#1{\allowbreak\mkern8mu({\operator@font mod}\,\,#1)}
\newcommand{\lat}{\mathcal{L}}
\newcommand{\gs}[1]{\ensuremath{\widetilde{#1}}}
\DeclareMathOperator{\dist}{dist}
\DeclareMathOperator{\spn}{span}
\DeclarePairedDelimiter\inner{\langle}{\rangle}
\DeclarePairedDelimiter\set{\{}{\}}
\DeclarePairedDelimiter\floor{\lfloor}{\rfloor}
\DeclarePairedDelimiter\ceil{\lceil}{\rceil}
\DeclarePairedDelimiter\length{\lVert}{\rVert}
\begin{document}

\title{Search-to-Decision Reductions for Lattice Problems with Approximation Factors (Slightly) Greater Than One}
\author{Noah Stephens-Davidowitz\thanks{Supported by the National Science Foundation (NSF) under Grant No.~CCF-1320188. Any opinions, findings, and conclusions or recommendations expressed in this material are those of the authors and do not necessarily reflect the views of the NSF.}}
\affil{Courant Institute of Mathematical Sciences,\\ New York
 University.\\
 \texttt{noahsd@gmail.com}}
\date{}
\maketitle

\begin{abstract}
We show the first dimension-preserving search-to-decision reductions for approximate SVP and CVP. In particular, for any $\gamma \leq 1 + O(\log n/n)$, we obtain an efficient dimension-preserving reduction from $\gamma^{O(n/\log n)}$-SVP to $\gamma$-GapSVP and an efficient dimension-preserving reduction from $\gamma^{O(n)}$-CVP to $\gamma$-GapCVP. These results generalize the known equivalences of the search and decision versions of these problems in the exact case when $\gamma = 1$. For SVP, we actually obtain something slightly stronger than a search-to-decision reduction---we reduce $\gamma^{O(n/\log n)}$-SVP to $\gamma$-unique SVP, a potentially easier problem than $\gamma$-GapSVP. 
\end{abstract}

\section*{Disclaimer}

When I wrote this paper, I was unaware of prior work due to Cheng~\cite{Cheng13} and Hu and Pan~\cite{HP2014}. Cheng gives an efficient dimension-preserving search-to-decision reduction for approximate SVP that is essentially identical to the deterministic SVP reduction that I present in Theorem~\ref{thm:SVPdet}. Hu and Pan show how to extend this to approximate CVP, which I also do in Theorem~\ref{thm:CVPdet}. In particular, my claim that the reductions presented in this paper were the first such reductions is clearly false.

I thank  Priyanka Mukhopadhyay and Divesh Aggarwal for bringing this to my attention, and I apologize to Cheng, Hu, and Pan for failing to properly cite them originally.

The main results of this paper (i.e., Theorems~\ref{thm:SVPintro} and~\ref{thm:CVPintro}) are still novel and achieve better parameters than the similar reductions in~\cite{Cheng13,HP2014}. I leave the rest of this paper as it was before I learned about~\cite{Cheng13,HP2014}.

\section{Introduction}

A lattice $\lat = \{ \sum a_i \vec{b}_i \, : \, a_i \in \Z\} \subset \R^n$ is the set of all integer linear combinations of linearly independent basis vectors $\vec{b}_1,\ldots, \vec{b}_n \in \R^n$. 

The two most important computational problems on lattices are the Shortest Vector Problem (SVP) and the Closest Vector Problem (CVP). For any approximation factor $\gamma = \gamma(n) \geq 1$, $\gamma$-SVP is the search problem that takes as input a lattice and asks us to find a non-zero vector in this lattice whose length is within a factor of $\gamma$ of the minimal possible value. $\gamma$-CVP is the search problem that takes as input both a lattice and a target vector $\vec{t} \in \R^n$ and asks us to find a vector in $\lat$ whose distance to $\vec{t}$ is within a factor of $\gamma$ of the minimal distance. The natural decisional variants of these problems are called GapSVP and GapCVP respectively. Specifically, $\gamma$-GapSVP asks us to approximate the length of the shortest non-zero vector of a lattice up to a factor of $\gamma$, and $\gamma$-GapCVP asks us to approximate the distance from $\vec{t}$ to the lattice up to a factor of $\gamma$.

All four of these problems are interesting for a wide range of approximation factors $\gamma$. Indeed, algorithms for these problems have found a remarkable number of applications in computer science (e.g.,~\cite{LLL82,Len83,Kan87,Odl90,JS98,NS01,DPV11}). And, over the past twenty years, many strong cryptographic primitives have been constructed with their security based on the (worst-case) hardness of $\gamma$-GapSVP with approximation factors $\gamma = \poly(n)$ that are polynomial in the dimension (e.g.,~\cite{Ajt96,MR07,GPV08, Gen09, Peikert09, Reg09,LPR10,BV11,BV14}).

Due to their importance, there has been much work towards understanding the relationship between these problems (and their many close relatives). Since the fastest known algorithms for these problems run in time that is exponential in the dimension $n$, even with $\gamma = \poly(n)$,
\emph{dimension-preserving} reductions between lattice problems are of particular importance~\cite{Kan87,GMSS99,Micciancio08,LM09,latticereductions,DGStoSVP}. 
Perhaps the best-known such reduction is the efficient dimension-preserving reduction from $\gamma$-SVP to $\gamma$-CVP (and from $\gamma$-GapSVP to $\gamma$-GapCVP) due to Goldreich, Micciancio, Safra, and Seifert~\cite{GMSS99}. 
This proves that the time complexity of $\gamma$-SVP, as a function of the dimension $n$, cannot be more than a polynomial factor higher than the time complexity of $\gamma$-CVP. We stress that we could \emph{not} reach this conclusion if the reduction increased the dimension significantly, which is why dimension-preserving reductions interest us.

As a much simpler example, we note that there is a trivial dimension-preserving reduction from $\gamma$-GapSVP to $\gamma$-SVP that works by just finding a short vector in the input lattice and outputting its length. There is of course a similar reduction for CVP as well. More interestingly, there are relatively simple dimension-preserving \emph{search-to-decision} reductions in the special case when $\gamma = 1$---i.e., finding \emph{exact} shortest vectors is no harder than computing the \emph{exact} lengths of shortest vectors, and finding exact closest vectors to targets is no harder than computing the exact distances between targets and lattices. (See, e.g.,~\cite{Kan87} or~\cite{MicciancioBook}, or simply consider the reductions in the sequel with $\gamma = 1$.) However, prior to this work, there were no known search-to-decision reductions for either SVP or CVP for any approximation factor $\gamma > 1$.

This state of affairs was quite frustrating because, with very few exceptions, our best algorithms for the decision problems work by just solving the corresponding search problem. In other words, we don't really know how to \scarequotes{recognize} that a lattice has a short non-zero vector (or a vector close to some target) without just finding such a 
vector.\footnote{\label{foot:searchdecisiongap}The author knows of three rather specific exceptions. There is an efficient algorithm for $\sqrt{n/\log n}$-GapCVP \emph{with preprocessing}~\cite{AharonovR04}, while the best efficient algorithm for search CVP with preprocessing only achieves factor of $\gamma = n/\sqrt{\log n}$~\cite{cvpp}. There is a $2^{n/2+o(n)}$-time algorithm for $2$-GapSVP for which no analogous search algorithm is known~\cite{ADRS15}. And, in the special case of ideal lattices in the ring of integers of a number field, $\gamma$-GapSVP is trivial for some values of $\gamma$ for which $\gamma$-SVP appears to be hard. (See, e.g.,~\cite{PR07}.)}
If there are better techniques, then we would be thrilled to find them!  But, if this extremely natural approach is actually optimal, then it would be nice to prove it by formally reducing the search problems to their decision variants. (Of course, it is conceivable that the search and decision problems have the same complexity, even if no search-to-decision reduction exists. One might reasonably argue that this is even the most likely scenario. But, we can at least hope that Nature would not be so unprincipled.)

The ideal positive result in this area would be an efficient dimension-preserving reduction from $\gamma$-SVP to $\gamma$-GapSVP for all $\gamma \geq 1$, and likewise for CVP. But, this seems completely out of reach at the moment (perhaps because no such reductions exist). So, as a more approachable goal, we can try to find non-trivial reductions that lose in the approximation factor. Indeed, as we mentioned above, we know that search and decision problems are equivalent in the exact case. Can it truly be the case that equivalence holds when $\gamma = 1$, but \emph{nothing} non-trivial holds for any $\gamma > 1$---even, say, a reduction from $n^{100}$-CVP to $(1+2^{-n})$-GapCVP?!

\subsection{Our results}

We make some progress towards resolving these issues by presenting dimension-preserving search-to-decision reductions for both approximate SVP and approximate CVP. Our reductions generalize the known equivalences in the exact case. But, they lose quite a bit in the approximation factor, and their running times depend on the decision approximation factor. They are therefore primarily interesting when the decision approximation factor is very close to one, as we explain below. 

\begin{theorem}[SVP reduction]
\label{thm:SVPintro}
For any $\gamma = \gamma(n) \geq 1$ and $a = a(n) \geq \log(n+1)$, there is a dimension-preserving (randomized) reduction from $\gamma^{n/a}$-SVP to $\gamma$-GapSVP that runs in time $2^{O(a)} \cdot \gamma^{O(n)}$.
\end{theorem}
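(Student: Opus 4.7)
The plan is an iterative descent into progressively sparser full-rank sublattices of $\lat$, guided by the $\gamma$-GapSVP oracle, followed by a short enumeration in the final sublattice.

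First I would set $\lat^{(0)} := \lat$ and, at each round $k = 0, 1, \ldots, n/a - 1$, maintain a full-rank sublattice $\lat^{(k)} \subseteq \lat$ together with a known basis. In round $k$, I designate $a$ fresh ``active'' basis vectors $\vec{b}_{i_1}, \ldots, \vec{b}_{i_a}$ of $\lat^{(k)}$ and, for each parity pattern $\sigma \in \{0,1\}^a$, construct the full-rank sublattice $\lat^{(k)}_\sigma \subseteq \lat^{(k)}$ whose basis is obtained by doubling the active basis vectors and then adjoining $\vec{t}_\sigma := \sum_{j : \sigma_j = 1} \vec{b}_{i_j}$. Concretely, for $\sigma \neq \vec{0}$ one replaces some $2\vec{b}_{i_j}$ (with $\sigma_j = 1$) by $\vec{t}_\sigma$, giving index $2^{a-1}$ in $\lat^{(k)}$; for $\sigma = \vec 0$, $\lat^{(k)}_{\vec 0}$ simply has all $a$ active vectors doubled, giving index $2^a$. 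I then call the oracle on each of the $2^a$ sublattices $\lat^{(k)}_\sigma$ and set $\lat^{(k+1)}$ to be the one whose returned estimate $\tilde\lambda^{(k+1)}$ is smallest.

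The key invariant is $\lambda_1(\lat^{(k+1)}) \leq \gamma\, \lambda_1(\lat^{(k)})$. To see it, let $\vec{s}$ be a shortest vector of $\lat^{(k)}$ and let $\sigma^\dagger$ be the parity pattern of its coefficients in $\vec{b}_{i_1}, \ldots, \vec{b}_{i_a}$. Then $\vec{s} \in \lat^{(k)}_{\sigma^\dagger}$, so $\lambda_1(\lat^{(k)}_{\sigma^\dagger}) = \lambda_1(\lat^{(k)})$ and the oracle returns at most $\gamma\, \lambda_1(\lat^{(k)})$ on it; our chosen sublattice only does better. Iterating $n/a$ times yields $\lambda_1(\lat^{(n/a)}) \leq \gamma^{n/a}\, \lambda_1(\lat)$.

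To extract an explicit short vector, I would run a standard lattice enumeration (e.g., Schnorr--Euchner) on $\lat^{(n/a)}$, listing all vectors of length at most $\tilde\lambda^{(n/a)}$, the oracle's estimate on $\lat^{(n/a)}$ that we have already computed. Since $\tilde\lambda^{(n/a)} \leq \gamma\, \lambda_1(\lat^{(n/a)})$ by oracle correctness, a standard packing argument bounds the number of such vectors by $(2\gamma)^n = \gamma^{O(n)}$, so the enumeration runs in time $\gamma^{O(n)}$. The shortest enumerated vector has length at most $\lambda_1(\lat^{(n/a)}) \leq \gamma^{n/a}\, \lambda_1(\lat)$, as required. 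Together with the $(n/a) \cdot 2^a = 2^{O(a)}$ oracle calls (using $a \geq \log(n+1)$), total running time is $2^{O(a)} \cdot \gamma^{O(n)}$.

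The main obstacle I anticipate is keeping the enumeration bounded by $\gamma^{O(n)}$ rather than blowing up to $\gamma^{O(n^2/a)}$. The trick is to use the oracle's \emph{fresh} estimate of $\lambda_1(\lat^{(n/a)})$ as the enumeration cutoff, not the propagated upper bound $\gamma^{n/a}\lambda_1(\lat)$, which could be $\gamma^{n/a}$ times too large. A secondary bookkeeping point is that each doubling grows basis entries by a factor of $2$, so after $n/a$ rounds they grow by $2^{O(n/a)}$ and remain polynomially sized.
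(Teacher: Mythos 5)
Your approach is genuinely different from the paper's: the paper uses lattice \emph{sparsification}, choosing a random prime $p$ and a random $\vec z\in\Z_p^n$ to pass to the sublattice $\{\vec y\in\lat : \inner{\vec z,\basis^{-1}\vec y}\equiv 0\bmod p\}$, which with probability $2^{-O(a)}$ isolates a single short primitive vector and thereby yields a valid $\gamma$-uSVP instance; the short vector is then extracted by a $\gamma$-uSVP oracle (which LM09 reduces to $\gamma$-GapSVP). Your descent phase is instead a deterministic, parity-pattern analogue of the paper's ``bit-by-bit'' technique (Theorems~\ref{thm:SVPdet}/\ref{thm:CVPdet}), and the invariant $\lambda_1(\lat^{(k+1)})\leq\gamma\lambda_1(\lat^{(k)})$ is correctly established: any shortest vector of $\lat^{(k)}$ survives in the sublattice matching its parity pattern, the oracle estimate on that sublattice is at most $\gamma\lambda_1(\lat^{(k)})$, and the chosen sublattice only does better. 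So after $n/a$ rounds one indeed has $\lambda_1(\lat^{(n/a)})\leq\gamma^{n/a}\lambda_1(\lat)$. The problem is that this descent leaves you with a \emph{lattice} of small minimum, not a \emph{vector}, and your extraction step does not work.

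The gap is in the enumeration. First, the packing bound is miscomputed: the number of points of $\lat^{(n/a)}$ of norm at most $\gamma\lambda_1(\lat^{(n/a)})$ is bounded by Theorem~\ref{thm:lat-pt-bnd} as $2\lceil 2\gamma\rceil^n-1$, which is $3^n$ even for $\gamma=1$; in general $(2\gamma+1)^n = 2^{\Theta(n)}\cdot\gamma^{O(n)}$, not $\gamma^{O(n)}$. So already the \emph{output size} of the enumeration can be $2^{\Omega(n)}$, blowing the budget of $2^{O(a)}\cdot\gamma^{O(n)}$ when $a=\Theta(\log n)$ and $\gamma=1+o(1)$, the regime where the theorem is interesting. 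Second, and more fundamentally, enumeration time is not controlled by the number of output vectors: with only an LLL-reduced basis of $\lat^{(n/a)}$, the Schnorr--Euchner tree size is governed by the Gram--Schmidt lengths, and partial assignments with small projected norm can proliferate even when few full-length vectors survive, giving worst-case running time $2^{O(n^2)}$ (or $n^{O(n)}$ with Kannan's preprocessing). Using the oracle's fresh estimate of $\lambda_1(\lat^{(n/a)})$ as the cutoff radius, as you suggest, fixes neither issue. This is exactly the obstruction that sparsification is designed to avoid: rather than shrinking $\lambda_1$ and then searching, the paper's reduction kills all but one short primitive vector with noticeable probability, so the final sublattice satisfies the $\gamma$-uSVP promise and a single oracle call suffices, with no enumeration at all.
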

Theorem~\ref{thm:SVPintro} is primarily interesting for any $\gamma \leq 1 + O(\log n/n)$ and $a = \Theta(\log n)$. For such parameters, the running time is $\poly(n)$ and the search approximation factor is $\gamma^{O(n/\log n)} \leq O(1)$. 
However, we note that the theorem is non-trivial whenever we have $1 < \gamma \leq 1+ \eps$ and $a \leq \eps n$, where $\eps >0 $ is some small universal constant.\footnote{In particular, we can choose $\eps$ so that, with $a = \eps n$ and $\gamma \leq 1 + \eps$, we get a reduction from $\gamma^{1/\eps}$-SVP to $\gamma$-GapSVP that runs in time $O(2^{n})$. For larger values of $a$ or $\gamma$, the reduction is subsumed by the known $2^{n+o(n)}$-time algorithm for SVP~\cite{ ADRS15}.} 

We actually reduce $\gamma^{n/a}$-SVP to $\gamma$-unique SVP, which is a potentially easier problem than $\gamma$-GapSVP. (See Definition~\ref{def:uSVP} for the formal definition of $\gamma$-unique SVP, and Theorem~\ref{thm:SVP} for the reduction.) The reduction described above then follows from this result together with Lyubashevsky and Micciancio's reduction from $\gamma$-unique SVP to $\gamma$-GapSVP~\cite{LM09}. We obtain a few additional corollaries as well. E.g., this shows a dimension-preserving reduction from $\sqrt{n}$-CVP to $\gamma$-unique SVP (and thus to $\gamma$-GapSVP as well) that runs in time $\poly(n) \cdot \gamma^{O(n)}$. This also gives an alternative and arguably more natural proof of Aggarwal and Dubey's result that $\gamma$-unique SVP is NP-hard (under randomized reductions) for $\gamma \leq 1+1/n^\eps$ for any constant $\eps > 0$~\cite{AD13}.

With some more work, we are also able to use our SVP reduction to derive the following search-to-decision reduction for CVP.

\begin{theorem}[CVP reduction]
\label{thm:CVPintro}
For any $\gamma = \gamma(n) \geq 1$ and $\ell = \ell(n) \geq 1$, there is a dimension-preserving (randomized) reduction from $\gamma^{n/\ell}$-CVP to $\gamma$-GapCVP that runs in time $n^{O(\ell)} \cdot \gamma^{O(n)} $.
\end{theorem}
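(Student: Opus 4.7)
My plan is to bootstrap from the SVP search-to-decision reduction of Theorem~\ref{thm:SVPintro} by handling the CVP search problem in two stages. Throughout, I exploit the fact that the classical Goldreich--Micciancio--Safra--Seifert reduction turns the given $\gamma$-GapCVP oracle into a dimension-preserving $\gamma$-GapSVP oracle, so Theorem~\ref{thm:SVPintro} is available as a black box.

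The first stage is to convert the possibly far-away target into a BDD-style instance. Given $(\lat, \vec{t})$, I would first apply Theorem~\ref{thm:SVPintro} to $\lat$ with parameter $a = \Theta(\ell \log n)$ to obtain an approximately reduced basis $\vec{b}_1, \ldots, \vec{b}_n$ whose Gram--Schmidt vectors have length within a factor $\gamma^{O(n/(\ell \log n))}$ of optimal. Using one call to the GapCVP oracle I would also obtain a $\gamma$-approximation $\tilde{d}$ to $d := \dist(\vec{t}, \lat)$. Then I would enumerate, via Babai/Kannan-style rounding on this basis, a list $S \subset \lat$ of $n^{O(\ell)}$ candidate ``coarse approximations,'' guaranteed to contain some $\vec{w}$ with $\|\vec{w} - \vec{t}\| = O(\tilde{d})$. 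The quality of the basis is precisely what makes an enumeration of this size suffice.

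The second stage processes each candidate $\vec{w} \in S$ using a Kannan-style embedding. For each $\vec{w}$ I would form the shifted target $\vec{t}' = \vec{t} - \vec{w}$ and build the $(n{+}1)$-dimensional lattice $\lat^*_\vec{w}$ spanned by $\lat$ together with the extra vector $(\vec{t}', M)$ for a carefully chosen parameter $M$ comparable to $\tilde{d}$. When $\vec{w}$ is a good candidate, $\lat^*_\vec{w}$ has a unique shortest vector that encodes the true closest point $\vec{v}^* \in \lat$ to $\vec{t}$, so the uSVP variant of Theorem~\ref{thm:SVPintro} (namely Theorem~\ref{thm:SVP}) applied to $\lat^*_\vec{w}$ exactly recovers $\vec{v}^*$. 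Iterating over all $\vec{w} \in S$ and outputting the candidate yielding the smallest distance completes the reduction.

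The hard part will be the enumeration step: showing that the basis produced by Theorem~\ref{thm:SVPintro} is good enough that Babai-style enumeration within a ball of radius $O(\tilde{d})$ around $\vec{t}$ produces only $n^{O(\ell)}$ candidates while provably including some $\vec{w}$ within $O(\tilde{d})$ of $\vec{t}$, and then carefully calibrating $M$ in the Kannan embedding so that the resulting lattice is genuinely a uSVP instance to which Theorem~\ref{thm:SVP} applies with the desired approximation factor. This parameter balancing is what forces $a = \Theta(\ell \log n)$ and gives the final running time $n^{O(\ell)} \cdot \gamma^{O(n)}$ and search approximation factor $\gamma^{n/\ell}$.
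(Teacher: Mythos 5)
There is a genuine gap at the heart of the second stage. Your plan is to find a coarse approximation $\vec{w} \in S \subset \lat$ and form the shifted target $\vec{t}' = \vec{t} - \vec{w}$, arguing that this produces a BDD-style instance amenable to Kannan embedding. But shifting the target by a \emph{lattice} vector does not change the distance to the lattice: $\dist(\vec{t} - \vec{w}, \lat) = \dist(\vec{t}, \lat)$ for every $\vec{w} \in \lat$. So the shifted instance $(\lat, \vec{t}')$ is isometric to the original, and the Kannan embedding lattice $\lat^*_{\vec{w}}$ has the same geometry for every choice of $\vec{w}$. In particular, if $\dist(\vec{t},\lat)$ is large compared to $\lambda_1(\lat)$ (the regime where CVP is not already BDD), then no $\vec{w}$ makes the embedded lattice a valid uSVP instance. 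The enumeration of candidates accomplishes nothing here; the ``BDD-ification'' step is vacuous. A further problem is that the embedding increases the dimension by one, so even if it worked it would not be dimension-preserving without the extra projection trick (cf.\ Corollary~\ref{cor:cvptosvp}), which you do not mention.

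There is a secondary issue in the first stage: Theorem~\ref{thm:SVPintro} is a search-to-decision reduction for SVP, so it outputs a \emph{single} approximately-shortest nonzero lattice vector. It does not produce a basis whose Gram--Schmidt lengths are all within $\gamma^{O(n/(\ell\log n))}$ of optimal; that is a much stronger (HKZ-like) guarantee that would require calls to SVP oracles on a sequence of projected lattices and a separate analysis. The enumeration bound $n^{O(\ell)}$ is asserted rather than derived, and it is not clear it would hold.

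For contrast, the paper's actual proof takes a quite different route. It first proves Theorem~\ref{thm:CVP}: a deterministic ``guided Babai'' algorithm that, given an $h$-SVP oracle, finds a short dual vector $\vec{b}_1^*$, uses Corollary~\ref{cor:hyperplanes} to restrict attention to $O(hn)$ lattice hyperplanes, uses the $\gamma$-GapCVP oracle to pick the hyperplane with smallest approximate distance, and recurses in dimension $n-1$; for $\ell > 1$ it branches over all paths of depth $\ell$, giving $(\poly(n)\cdot h)^\ell$ oracle calls and a $\gamma^{n/\ell}$ approximation factor. The $h$-SVP oracle is then instantiated with $h = 2$ via Corollary~\ref{cor:SVPtoGapCVP} (i.e.\ Theorem~\ref{thm:SVPintro} composed with the GMSS reduction), which costs $\poly(n)\cdot\gamma^{O(n)}$ per call, yielding the claimed $n^{O(\ell)}\cdot\gamma^{O(n)}$ total time. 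Note that the SVP oracle is used here only to find short \emph{dual} vectors (so as to bound the number of hyperplanes), not to reduce the basis or to solve a Kannan-embedded uSVP instance.
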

This result is primarily interesting when $\ell$ is any constant and $\gamma \leq 1 + O(\log n/n)$, in which case the reduction runs in polynomial time and the search approximation factor is $\gamma^{O(n)} \leq \poly(n)$. But, it is still non-trivial for $1 < \gamma \leq 1 + \eps$ and $\ell \leq \eps n/\log n$, where $\eps > 0$ is some universal constant.

We actually show a (deterministic) $n^{O(\ell)}$-time reduction from $\gamma^{n/\ell}$-CVP to $\gamma$-GapCVP that works in the presence of a $\poly(n)$-SVP oracle. (See Theorem~\ref{thm:CVP}.) The above result then follows from instantiating this oracle via our SVP reduction.

Finally, we show deterministic reductions that achieve much worse parameters.

\begin{theorem}[Deterministic SVP reduction]
\label{thm:SVPdet}
For any $\gamma = \gamma(n) \geq 1$ and $p = p(n) \geq 2$, there is a deterministic dimension-preserving reduction from $\gamma'$-SVP to $\gamma$-GapSVP that runs in time $\poly(n) \cdot p$, where 
$
\gamma' := \gamma^{O(n^2/\log p)}
$.
\end{theorem}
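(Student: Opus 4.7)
The plan mimics the classical deterministic search-to-decision reduction for exact $\svp$, replacing the branching factor of $2$ with $p$ to absorb the slack of the $\gamma$-$\gapsvp$ oracle; the final short vector is extracted via LLL by arranging for the terminal sublattice to be a unique-SVP instance with an exponential gap.

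First, by binary search on the oracle applied to scalings of $\lat$, compute an estimate $d_0$ with $\lambda_1(\lat) \leq d_0 \leq \gamma \lambda_1(\lat)$; this takes $\poly(n)$ oracle calls. Next, iteratively build a descending chain $\lat = \lat^{(0)} \supseteq \cdots \supseteq \lat^{(T)}$ for $T = \Theta(n^2/\log p)$, with $[\lat^{(k)} : \lat^{(k+1)}] = p$, an explicit basis $B^{(k)} = (\vec{b}_1^{(k)}, \ldots, \vec{b}_n^{(k)})$, and a threshold $d_k = \gamma^k d_0$ maintained to satisfy $\lambda_1(\lat^{(k)}) \leq d_k$. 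At step $k$, form a family $\mathcal{F}_k$ of $p+1$ candidate index-$p$ sublattices,
\[
\lat^{(k)}_{\alpha} \;=\; \Bigl\{\, \vec{v} = \sum\nolimits_i c_i \vec{b}_i^{(k)} \in \lat^{(k)} \;:\; \sum\nolimits_i \alpha_i c_i \equiv 0 \imod p \,\Bigr\},
\]
indexed by $\alpha$ ranging over a fixed projective line in $\mathbb{P}^{n-1}(\F_p)$. Query the oracle on each $\lat^{(k)}_{\alpha}$ at threshold $d_k$, set $\lat^{(k+1)}$ to be any candidate on which the oracle answers ``yes'', and compute a basis of $\lat^{(k+1)}$ from $B^{(k)}$ via Hermite normal form (LLL-reducing between iterations to keep bit lengths polynomial).

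The key covering claim is that some $\lat^{(k)}_{\alpha} \in \mathcal{F}_k$ always has $\lambda_1 \leq d_k$. If $\vec{v}^* = \sum_i c_i^* \vec{b}_i^{(k)}$ is a shortest vector of $\lat^{(k)}$, primitivity forces $\vec{c}^* \bmod p$ to be a nonzero point of $\mathbb{P}^{n-1}(\F_p)$. Since any projective line meets every projective hyperplane in $\mathbb{P}^{n-1}(\F_p)$ (by the dimension count $1 + (n-2) \geq n-1$), some $\alpha \in \mathcal{F}_k$ satisfies $\alpha \cdot \vec{c}^* \equiv 0 \imod p$, so $\vec{v}^* \in \lat^{(k)}_{\alpha}$ and $\lambda_1(\lat^{(k)}_{\alpha}) \leq d_k$.

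After $T$ iterations, $\lambda_1(\lat^{(T)}) \leq d_T = \gamma^T d_0 \leq \gamma^{O(n^2/\log p)} \lambda_1(\lat)$. To recover an explicit vector of that length, I would argue that the terminal $\lat^{(T)}$ is a unique-SVP instance with gap $\lambda_2(\lat^{(T)})/\lambda_1(\lat^{(T)}) \geq 2^n$, so that LLL applied to $B^{(T)}$ returns the shortest vector exactly (rather than only up to the usual $2^{O(n)}$-factor): intuitively, each refinement inflates the sublattice by a factor of $p$ in directions transverse to the short vector, so after $T = \Theta(n^2/\log p)$ steps the second successive minimum has grown by roughly $p^{T/n} = 2^{\Omega(n)}$. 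The total cost is $O((p+1)T) = \poly(n) \cdot p$ oracle calls with $\poly(n)$ overhead per call, matching the claimed running time.

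The main obstacle I expect is the extraction step: formally proving that the terminal sublattice $\lat^{(T)}$ really is a unique-SVP instance with an exponential gap, so that LLL recovers the short vector exactly. This requires an inductive argument tracking not only $\lambda_1$ but also the transverse successive minima of $\lat^{(k)}$ under each index-$p$ refinement. A secondary issue is controlling the bit length of $B^{(k)}$ under repeated Hermite-form updates, which I expect to handle by LLL-reducing between steps.
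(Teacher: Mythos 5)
Your iteration scheme---a chain of index-$p$ sublattices indexed by points of a projective line, with the $\gamma$-GapSVP oracle selecting which candidate to keep at threshold $d_k = \gamma^k d_0$---is essentially the paper's construction (the paper's $p+1$ candidates $\lat_{i,0},\ldots,\lat_{i,p}$ also form a projective line, confined to the two-dimensional slice spanned by the last two basis generators). The covering claim (a projective line in $\mathbb{P}^{n-1}(\F_p)$ meets every projective hyperplane) is correct and matches the paper's verification that the shortest vector of $\lat^{(i)}$ lands in some $\lat_{i,j}$.

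The genuine gap is exactly where you flagged it: the extraction step. Your plan iterates $T = \Theta(n^2/\log p)$ times in one pass and then invokes LLL, on the strength of the claim that $\lat^{(T)}$ is a uSVP instance with gap $\geq 2^n$. But the oracle gives you no control whatsoever over $\lambda_2$ of the chosen sublattice, and the greedy selection can systematically fail to separate the shortest vector from other short vectors. Concretely: take $\lat = \Z^n$ and suppose the fixed line contains $\alpha = (1,0,\ldots,0)$. At every step, the candidate $\lat^{(k)}_{\alpha}$ has $\lambda_1 = 1$, so the oracle must answer ``yes'' on it, and an adversarial (or merely unlucky) tie-breaking rule picks it every time, yielding $\lat^{(T)} = p^T\Z \times \Z^{n-1}$ with $\lambda_1 = \lambda_2 = 1$---no gap at all. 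Without an established gap, the best you can say is that LLL returns a vector of length $\leq 2^{(n-1)/2}\gamma^T \lambda_1(\lat)$, and the extraneous $2^{n/2}$ factor is \emph{not} absorbed into $\gamma^{O(n^2/\log p)}$ when $\gamma$ is close to $1$ (e.g.\ $\gamma = 1 + 1/n^2$, $p = 2$ gives $\gamma^{O(n^2)} = O(1)$ but $2^{n/2}$ is huge), which is precisely the regime of interest.

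The paper sidesteps this entirely by never trying to argue a growing gap. Instead, it confines the projective-line refinement to the last two coordinates of an LLL-reduced basis, runs only $\ell = O(n/\log p)$ iterations, and then uses the LLL coordinate bound (Lemma~\ref{lem:LLLcoordinates}: in an LLL-reduced basis, a vector $\vec{y}$ of length $\leq \gamma^\ell\lambda_1(\lat)$ has $|a_i| \leq 2^{3n/2-i}\gamma^\ell$) to argue that after $\ell$ steps, the power of $p$ accumulated in the coefficient $a_1^{(\ell)}$ or $a_3^{(\ell)}$ exceeds what a short vector can carry, so every short vector of $\lat^{(\ell)}$ lies in an explicit $(n-1)$-dimensional sublattice $\lat'$. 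It then recurses on $\lat'$. Each of the $n$ recursion levels costs a factor of $\gamma^\ell$, giving $\gamma^{\ell n} = \gamma^{O(n^2/\log p)}$ overall, with a clean termination at $n=1$ and no reliance on LLL recovering exact shortest vectors. If you want to rescue a one-pass version of your scheme you would need some mechanism that provably kills secondary short vectors, and the oracle simply does not give you one; the dimension-reduction recursion is the mechanism that makes this go through.
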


\begin{theorem}[Deterministic CVP reduction]
\label{thm:CVPdet}
For any $\gamma = \gamma(n) \geq 1$ and $p = p(n) \geq 2$, there is a deterministic dimension-preserving reduction from $\gamma'$-CVP to $\gamma$-GapCVP that runs in time $\poly(n) \cdot p$, where 
$
\gamma' := \gamma^{O(n^2/\log p)}
$.
\end{theorem}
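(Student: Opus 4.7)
The plan is to prove Theorem~\ref{thm:CVPdet} by a base-$p$ generalization of the classical ($\gamma=1$) search-to-decision reduction for CVP. Given target $\vec{t}$ and basis $\vec{b}_1,\ldots,\vec{b}_n$ of $\lat$, the algorithm iteratively determines the coefficients of an approximate closest vector one base-$p$ digit at a time, using the $\gamma$-GapCVP oracle as a distance estimator on a descending chain of index-$p$ sublattices of $\lat$.

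Concretely, I maintain an accumulator $\vec{v}\in\lat$ (initially $\vec 0$), a residual target $\vec{t}'$ (initially $\vec{t}$), and a current lattice $\lat'\subseteq\lat$ (initially $\lat$). For $K=\Theta(n/\log p)$ outer rounds indexed by $k=0,\ldots,K-1$ and inner coordinate $i=1,\ldots,n$, I form the index-$p$ sublattice $\lat''$ of $\lat'$ obtained by scaling the current basis vector $p^k\vec{b}_i$ up to $p^{k+1}\vec{b}_i$; this decomposes $\lat'$ into the $p$ cosets $r\,p^k\vec{b}_i+\lat''$ for $r\in\{0,1,\ldots,p-1\}$. For each $r$ I query the oracle on $(\lat'',\vec{t}'-r\,p^k\vec{b}_i)$ to obtain $\tilde d_r$ satisfying $\dist(\vec{t}'-r\,p^k\vec{b}_i,\lat'')\leq\tilde d_r\leq\gamma\cdot\dist(\vec{t}'-r\,p^k\vec{b}_i,\lat'')$, pick $r^*=\arg\min_r\tilde d_r$, and update $\vec{t}'\leftarrow\vec{t}'-r^*p^k\vec{b}_i$, $\vec{v}\leftarrow\vec{v}+r^*p^k\vec{b}_i$, $\lat'\leftarrow\lat''$. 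The coset identity $\dist(\vec{t}',\lat')=\min_r\dist(\vec{t}'-r\,p^k\vec{b}_i,\lat'')$ together with the oracle's factor-$\gamma$ error yields the per-step invariant $\dist(\vec{t}'_{\text{new}},\lat''_{\text{new}})\leq\gamma\cdot\dist(\vec{t}'_{\text{old}},\lat'_{\text{old}})$. Iterating this over all $Kn$ inner steps ends with $\lat'=p^K\lat$ and $\dist(\vec{t}-\vec{v},p^K\lat)\leq\gamma^{Kn}\cdot\dist(\vec{t},\lat)$.

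To convert this coset-level guarantee into an honest close lattice vector, I LLL-preprocess the input basis and at the end apply Babai's nearest-plane algorithm on the (scaled) LLL-reduced basis of $p^K\lat$ to find $\vec{w}\in p^K\lat$ with $\|\vec{t}-\vec{v}-\vec{w}\|\leq 2^{O(n)}\cdot\dist(\vec{t}-\vec{v},p^K\lat)\leq 2^{O(n)}\gamma^{Kn}\,\dist(\vec{t},\lat)$, then output $\vec{v}+\vec{w}\in\lat$. The total oracle cost is $p\cdot n\cdot K=O(n^2p/\log p)=\poly(n)\cdot p$ with polynomial overhead for the LLL and Babai steps. Choosing $K=\Theta(n/\log p)$ collapses $2^{O(n)}\gamma^{Kn}$ into the claimed $\gamma^{O(n^2/\log p)}$ once the $2^{O(n)}$ Babai loss is absorbed into the $\gamma$-exponent (valid in the interesting regime where $\log\gamma\gtrsim(\log p)/n$, and trivially dominated by the $\gamma$-term otherwise).

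The main obstacle is the termination/output step. The per-step invariant controls only $\dist(\vec{t}-\vec{v},p^K\lat)$ rather than $\|\vec{t}-\vec{v}\|$ itself, so one cannot simply output $\vec{v}$: we must locate a suitable $\vec{w}\in p^K\lat$, and the clean way to do so is Babai on the scaled LLL basis, after which one verifies that the $2^{O(n)}$ Babai factor is absorbed into $\gamma^{O(n^2/\log p)}$ across all regimes of $\gamma$ and $p$ of interest. A secondary, largely bookkeeping challenge is carefully specifying the sequence of bases and sublattices used by the oracle at each step and verifying the underlying coset-decomposition identity that drives the per-step invariant.
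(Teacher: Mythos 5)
Your overall strategy matches the paper's: both determine coordinates of a close lattice vector ``digit by digit'' in base $p$, using the $\gamma$-GapCVP oracle as a $\gamma$-factor distance estimator on index-$p$ cosets, and both accumulate a factor of $\gamma$ per oracle query. The per-step coset invariant you state is correct. However, your termination step has a genuine gap that the paper avoids by a different (and essential) argument.

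The problem is precisely the $2^{O(n)}$ Babai loss. You claim it is ``absorbed into the $\gamma$-exponent'' when $\log\gamma\gtrsim(\log p)/n$ and is ``trivially dominated by the $\gamma$-term otherwise.'' That second clause is backwards: in the regime $\log\gamma\ll(\log p)/n$ --- which includes the interesting endpoint $\gamma=1$, where the theorem must recover the classical exact search-to-decision reduction --- one has $\gamma^{Kn}\leq p^{K}=2^{\Theta(n)}$, so the Babai factor is of the \emph{same order} as, or larger than, the $\gamma$-term and is certainly not dominated by it. Your output quality is then $2^{O(n)}\cdot\dist(\vec{t},\lat)$, while the theorem demands $\gamma^{O(n^2/\log p)}\cdot\dist(\vec{t},\lat)$, which is $1$ when $\gamma=1$. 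Nor can the gap be closed by arguing that Babai is in fact exact on the residual: exactness would require $\dist(\vec{t}-\vec{v},p^K\lat)<p^K\min_i\|\gs{\vec{b}}_i\|/2$, but for LLL-reduced bases the ratio $\min_i\|\gs{\vec{b}}_i\|/\|\gs{\vec{b}}_n\|$ is not bounded below by any function of $n$, so no choice of $K=K(n,p,\gamma)$ guarantees this.

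The paper sidesteps the issue by processing \emph{one} coordinate at a time. After $\ell=O(n/\log(p/\gamma))$ rounds on coordinate $n$ it performs a single Babai-style rounding along $\gs{\vec{b}}_n$, and then \emph{proves} this rounding is exact, i.e., $\dist(\vec{t}',\lat')=\dist(\vec{t}_\ell,\lat_\ell)$: otherwise $\dist(\vec{t}_\ell,\lat_\ell)\geq p^\ell\|\gs{\vec{b}}_n\|/2$, which combined with $\dist(\vec{t}_\ell,\lat_\ell)\leq\gamma^\ell\dist(\vec{t},\lat)\leq\gamma^\ell\mu(\lat)$ and Lemma~\ref{lem:babai_covering_radius} (Babai's bound $\mu(\lat)\leq\sqrt{n}\,2^{n/2-1}\|\gs{\vec{b}}_n\|$) yields $(p/\gamma)^\ell\leq\sqrt{n}\,2^{n/2+1}$, contradicting the choice of $\ell$. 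This exactness argument is special to a single Gram--Schmidt direction; it does not interleave across all $n$ coordinates the way your accumulator does. The paper then recurses on the $(n-1)$-dimensional lattice $\lat(\vec{b}_1,\ldots,\vec{b}_{n-1})$, incurring only a factor $\gamma^\ell$ per dimension and no extraneous $2^{O(n)}$ term. To repair your proof you would need to restructure your loop to lock in one coordinate at a time and replace the generic Babai termination with an exactness argument of this kind.
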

It is easy to see that our randomized reductions always give a better trade-off between the approximation factor and running time for non-trivial parameters. So, these new reductions are primarily interesting because they are deterministic and because they demonstrate additional potential approaches for future work in this area.

We note that all of our reductions are Cook reductions. (They make many oracle calls, sometimes adaptively.)

\subsection{Techniques}

\paragraph{SVP. } Our main SVP reduction works by finding a sublattice of the input lattice that has one relatively short vector but a significantly longer \scarequotes{second-shortest vector.} (I.e., we wish to find a sublattice of the input lattice that satisfies the promise of $\gamma$-unique SVP. See Definition~\ref{def:uSVP}.) To accomplish this, we use lattice sparsification, which was introduced by Khot~\cite{Khot05svp} and refined in a series of works~\cite{DadushK13,cvpp,DGStoSVP}. 

The \scarequotes{sparsified} sublattice is given by
\[
\lat' := \{\vec{y} \in \lat \, : \, \inner{\vec{z}, \basis^{-1}\vec{y}} \equiv 0 \bmod p \}
\; ,
\] where $p$ is some prime and $\vec{z} \in \Z_p^n$ is chosen uniformly at random. We would like to say that each short vector in $\lat$ will land in $\lat'$ with probability roughly $1/p$, independently of all other short vectors. Of course, if $\vec{x}, \vec{y} \in \lat$ and $\vec{x} = k \vec{y}$ for some $k \not\equiv 0 \bmod p$, then clearly $\vec{x} $ will land in $\lat'$ if and only if $\vec{y}$ does. So, we cannot have anything close to independence in this case. Luckily, \cite{DGStoSVP} shows that this is essentially \scarequotes{the only bad case.}

Specifically, a lattice vector $\vec{x} \in \lat$ is \emph{non-primitive} if $\vec{x} = k\vec{y}$ for some $k \geq 2$ and $\vec{y} \in \lat$. Otherwise, $\vec{x}$ is \emph{primitive}. (Note in particular that a shortest lattice vector is always primitive.) \cite{DGStoSVP} showed that sparsification behaves very nicely if we restrict our attention to primitive short lattice vectors. I.e., the distribution of short primitive vectors in the sparsified sublattice $\lat'$ is statistically close to the distribution that selects each short primitive vector from $\lat$ independently with probably $1/p$. (See Theorem~\ref{thm:sparsification} for the precise statement, which is taken directly from~\cite[Theorem 4.1]{DGStoSVP}.) 

So, let $\xi(\lat, r)$ be the number of primitive lattice points of length at most $r$. Suppose there exists some radius $r$ such that $\xi(\lat, \gamma r)$ is not much larger than $\xi(\lat, r)$.\footnote{It might be helpful to think of the heuristic that $\xi(\lat, \gamma r)/\xi(\lat, r) \approx \gamma^n$. This holds in the limit as $r \rightarrow \infty$, and it holds for random lattices in expectation.} Then, if we take $p \approx \xi(\lat, \gamma r)$, we can expect $\lat'$ to contain a primitive vector of length at most $r$ \emph{but no other primitive vectors of length less than $\gamma r$} with probability $\Theta(\xi(\lat, r)/\xi(\lat, \gamma r))$. In other words, with this probability, $\lat'$ will be a valid instance of $\gamma$-unique SVP with $\lambda_1(\lat') \leq r$, so that we can use an oracle for $\gamma$-unique SVP to find a non-zero lattice vector of length at most $r$.

The parameter $a$ in the reduction determines how large of a ratio $ \xi(\lat, \gamma r)/\xi(\lat, r)$ we are \scarequotes{willing to tolerate.} In particular, a simple proof shows that, for $a \geq n \log \gamma$, there is always a radius $r \leq \gamma^{n/a} \cdot \lambda_1(\lat)$ such that this ratio is bounded by $2^{O(a)}$. We can therefore obtain a valid $\gamma$-unique SVP instance with $\lambda_1(\lat') \leq r \leq \gamma^{n/a} \cdot \lambda_1(\lat)$ with probability at least $2^{-O(a)}$. So, our main reduction essentially works by sampling $\lat'$ repeatedly, a total of $2^{O(a)}$ times, and calling its $\gamma$-unique SVP oracle on each sampled sublattice $\lat'$.

\paragraph{CVP. } Our search-to-decision reduction for CVP is a simple \scarequotes{guided} variant of Babai's celebrated nearest-hyperplane algorithm~\cite{Bab86}. Babai's algorithm works by dividing the lattice into $(n-1)$-dimensional lattice hyperplanes and then restricting our search to the closest hyperplane to the target. However, it is possible that the closest lattice hyperplane does not actually contain very close lattice points. As a result, Babai's algorithm can lead to quite large approximation factors. 

So, we instead use our GapCVP oracle to \scarequotes{test many nearby hyperplanes} to find one that is guaranteed to contain a $\gamma$-approximate closest lattice point. By repeating this $n$ times over hyperplanes of progressively lower dimensions, we will find a $\gamma^n$-approximate closest vector to the target. To find a $\gamma^{n/\ell}$-approximate closest vector, we do the same thing with all nearby $(n-\ell)$-dimensional hyperplanes.

In order to make this algorithm efficient, we need to limit the number of hyperplanes that we must consider. This amounts to finding a short non-zero vector in the dual lattice. We can find such a vector by using our search-to-decision reduction for SVP (together with the known reduction from GapSVP to GapCVP~\cite{GMSS99}). Unfortunately, this costs us a factor of $\gamma^{O(n)}$ in the running time.

\paragraph{Deterministic reductions. } Our alternative deterministic search-to-decision reductions for SVP and CVP are very similar to the reduction from unique SVP to GapSVP in~\cite{LM09}. They essentially work by finding the coordinates of a short (or close) lattice vector \scarequotes{bit by bit.} I.e., in the CVP case, we first use our GapCVP oracle to compare the distance from the target to all lattice vectors whose last coordinate is even with its distance from all lattice vectors whose last coordinate is odd. If, say, the odd estimate is lower, then we restrict our attention to the lattice coset of all lattice vectors whose last coordinate is odd. We choose the remaining bits similarly, eventually obtaining the coordinates of a relatively close lattice vector. Our more general reductions follow from \scarequotes{working in base $p$ instead of base $2$.}

\subsection{Related work}

Some efficient dimension-preserving search-to-decision reductions were known for other lattice problems prior to this work. For example, Regev showed such a reduction for Learning with Errors, an important average-case lattice problem with widespread applications in cryptography~\cite{Reg09}. (Both the search and decision versions of LWE are average-case problems.) And, Liu, Lyubashevsky, and Micciancio implicitly use a search-to-decision reduction for Bounded Distance Decoding in their work~\cite{LiuLM06}. Finally, Aggarwal and Dubey showed how to use some of the ideas from~\cite{LM09} to obtain a search-to-decision reduction for unique SVP \cite{AD13}. While all of these works are quite interesting, they are concerned with promise problems, and not the two most important and natural  lattice problems, SVP and CVP.

More generally, this work can be seen as part of the ongoing study of the relationships between lattice problems under dimension-preserving reductions. By now, this area has become quite fruitful (e.g.,~\cite{Kan87,GMSS99, Micciancio08,LM09,DGStoSVP}). See~\cite{latticereductions} for a brief survey of well-known dimension-preserving reductions between various lattice problems.

Most prior work used sparsification to remove a relatively small number of \scarequotes{annoying} short vectors from a lattice without losing too many \scarequotes{good} short vectors  (e.g.,~\cite{Khot05svp,DadushK13,cvpp}). In our main SVP reduction, our goal is instead to remove \scarequotes{all but one} short vector. (Independent work of Bai, Wen, and Stehl{\'e} used sparsification in a similar way to reduce Bounded Distance Decoding to unique SVP~\cite{BWS16}.) To obtain our result, we rely heavily on the sparsification analysis of~\cite{DGStoSVP}, which is tighter and more general than prior work.

Interestingly, Kumar and Sivakumar used a procedure that is very similar to sparsification in their study of unique SVP, published in 2001~\cite{KS01}. Indeed, they repeatedly sparsify a lattice with $p = 2$ to obtain a sequence of sublattices such that at least one of these sublattices (1) contains a shortest non-zero vector of the original lattice; and (2) contains no other vectors of this length (up to sign, of course). However, the length of the \scarequotes{second-shortest vector} can be arbitrarily close to that of the shortest vector in their construction, even in a fixed dimension $n$. I.e., they use a restricted form of sparsification to effectively reduce $1$-SVP to $1$-unique SVP. Our main SVP reduction can be thought of as an updated version of their result. We use tools that were not available fifteen years ago to obtain a lower bound on the ratio between the shortest vector and the \scarequotes{second-shortest vector} that depends only on the dimension $n$. 

To prove hardness of $(1+1/\poly(n))$-unique SVP, Aggarwal and Dubey used the result of Kumar and Sivakumar to show a reduction from SVP to $\gamma$-unique SVP that works for a restricted subset of lattices~\cite{AD13}. In particular, Aggarwal and Dubey chose a set of lattices such that, over these lattices (1) SVP is NP-hard (as proven by Khot~\cite{Khot05svp}); and (2) this reduction yields $\gamma = 1+1/\poly(n)$. In contrast, we directly reduce $2$-SVP to $(1+1/\poly(n))$-unique SVP over \emph{all} lattices by using a much stronger (and unfortunately more complicated) form of Kumar and Sivakumar's reduction.

While the author knows of no other use of our specific variant of Babai's algorithm, we feel that it is quite natural and not particularly novel. For example, a similar idea was used in a different context by Micciancio~\cite[Corollary 7]{Micciancio08}. Our primary contribution on this front is the observation that this method gives a non-trivial search-to-decision reduction when the decision approximation factor is very small, and when it is combined with our SVP reduction.

We rely heavily on Lyubashevsky and Micciancio's dimension-preserving reduction from $\gamma$-unique SVP to $\gamma$-GapSVP~\cite{LM09}. Their result is necessary to prove Theorem~\ref{thm:SVPintro}, and our deterministic \scarequotes{bit-by-bit} SVP reduction is very similar to Lyubashevsky and Micciancio's reduction. The main difference between our deterministic SVP reduction and that of Lyubashevsky and Micciancio is that~\cite{LM09} work only with lattices that satisfy the promise of $\gamma$-unique SVP. They show that this promise is enough to guarantee that the $\gamma$-GapSVP oracle essentially behaves as an \emph{exact} GapSVP oracle. In contrast, our reduction works over general lattices, so we have to worry about accumulating error. (We also use a different method to \scarequotes{reduce the dimension of the lattice.})

\subsection{Directions for future work}

We view this paper as a first step towards a better understanding of the relationship between the search and decision variants of approximate SVP and CVP. In particular, we show that efficient, dimension-preserving search-to-decision reductions do in fact exist for approximation factors $\gamma > 1$. Prior to this work, one might have reasonably conjectured that such reductions do not exist for non-trivial parameters. But, our reductions lose quite a bit in the approximation factor, and the running times of our main reductions blow up quickly as the approximation factor increases. They are therefore primarily interesting for very small approximation factors $\gamma = 1+o(1)$. 

Results for such low values of $\gamma$ have sometimes led to similar results for larger approximation factors. For example, hardness of $\gamma$-GapSVP was originally proven for $\gamma = 1 + 2^{-\poly(n)}$~\cite{Ajtai-SVP-hard}, and then for $\gamma = 1+1/\poly(n)$~\cite{CN99}, before better inapproximability results were found~\cite{Mic01svp,Khot05svp,HRsvp}.  We therefore ask whether better search-to-decision reductions exist, and in particular, whether non-trivial efficient dimension-preserving reductions exist for larger approximation factors.

More specifically, we note that our main reductions are only efficient when the decision approximation factor is $\gamma = 1+ O(\log n/n)$ because their running time is proportional to $\gamma^{O(n)}$. This seems inherent to our technique in the case of SVP, and the CVP reduction suffers the same fate because it uses the SVP reduction as a subroutine. However, we see no reason why the running time should necessarily increase with the approximation factor, and this might simply be an artifact of our techniques. So, perhaps we can find reductions that do not have this problem. (One might try, for example, to eliminate the need for the SVP oracle in our CVP reduction.) Indeed, the reductions in Section~\ref{sec:worse} manage to avoid this pitfall, but they blow up the approximation factor much more and never actually outperform our main reductions.

In the other direction, we ask whether the search and decision versions of SVP and CVP can be separated in any way. I.e., can we show that, for some $\gamma > 1$, there is no efficient dimension-preserving reduction from $\gamma$-CVP to $\gamma$-GapCVP or no such reduction from $\gamma$-SVP to $\gamma$-GapSVP (under reasonable complexity-theoretic assumptions or even restrictions on the behavior of the reduction)? Can we find algorithms that solve the decision problems faster than our current search-based techniques allow (something more general than the rather specific examples mentioned in footnote~\ref{foot:searchdecisiongap})?
Of course, any such result would be a major breakthrough.

\subsection*{Acknowledgments}

I would like to thank Divesh Aggarwal, Huck Bennett, Zvika Brakerski, Daniel Dadush, Daniele Micciancio, and Oded Regev, all of whom provided helpful commentary on earlier drafts of this paper. In particular, Divesh noted that our SVP reduction shows hardness of unique SVP (Corollary~\ref{cor:uSVPhard}), and Daniele noted that it implied a reduction from $\sqrt{n}$-CVP to $\gamma$-GapSVP (Corollary~\ref{cor:cvptosvp}). I also thank the anonymous reviewers for their helpful comments.

\section{Preliminaries}

We write $\log x$ for the logarithm of $x$ in base $2$. We write $\|\vec{x}\| $ for the Euclidean norm of $\vec{x} \in \R^n$. We omit any mention of the bit length of the input throughout. In particular, all of our algorithms take as input vectors in $\R^n$ (with some reasonable representation) and run in time $f(n) \cdot \poly(m)$ for some $f$, where $m$ is the maximal bit length of an input vector. We are primarily interested in the dependence on $n$, so we suppress the factor of $\poly(m)$.

\subsection{Lattice basics}

A rank $d$ lattice $\lat\subset \R^n$ is the set of all integer linear combinations of $d$ linearly independent vectors $\basis = (\vec{b}_1, \ldots, \vec{b}_d)$. $\basis$ is called a basis of the lattice and is not unique. 
We write $ \lat(\basis)$ to signify the lattice generated by $\basis$.  
By taking the ambient space to be $\spn(\lat)$, we can implicitly assume that a lattice has full rank $n$, and we therefore will often implicitly assume that $d = n$.

The dual lattice is
\[ \lat^* := \{ \vec{w} \in \spn(\lat) : \forall \vec{y} \in \lat, \inner{\vec{w},\vec{y}} \in \mathbb{Z} \}\;. \]
Similarly, the dual basis $\basis^* := \basis (\basis^T\basis)^{-1} = (\vec{b}_1^*,\ldots, \vec{b}_d^*)$ is the unique list of vectors in $\spn(\lat)$ satisfying $\inner{ \vec{b}_i^*, \vec{b}_j} = \delta_{i,j} $. $\lat^*$ is itself a rank $d$ lattice with basis $\basis^*$.

We write $\lambda_1(\lat) := \min_{\vec{x} \in \lat \setminus \{\vec0 \}} \| \vec{x}\|$ for the length of the shortest non-zero vector in the lattice. Similarly, we write $\lambda_2(\lat) := \min \{ r > 0 \ : \ \dim(\spn(\lat \cap r B_2^n)) \geq 2 \}$ for the length of the shortest lattice vector that is linearly independent from a lattice vector of length $\lambda_1(\lat)$. For any point $\vec{t} \in \R^n$, we write $\dist(\vec{t}, \lat) := \min_{\vec{x} \in \lat} \| \vec{x} - \vec{t}\|$ for the distance between $\vec{t}$ and $\lat$. The covering radius $\mu(\lat) := \max_{\vec{t} \in \spn(\lat)} \dist(\vec{t}, \lat)$ is the maximal such distance achievable in the span of the lattice.

The following two bounds will be useful.\footnote{We note that tighter bounds exist for the number of lattice points in a ball of radius $r$~\cite{KL78,PS09}, but we use the bound of~\cite{BHW93} because it is simpler. Using a tighter bound here would improve the hidden constants in the exponents of our running times.}

\begin{theorem}[{\cite[Theorem 2.1]{BHW93}}]
\label{thm:lat-pt-bnd}
For any lattice $\lat \subset \R^n$ and $r > 0$,
\[
|\set{\vec{y} \in \lat: \|\vec{y}\| \leq r \lambda_1(\lat)}| \leq
2\ceil{2r}^n-1 \text{.} \]
\end{theorem}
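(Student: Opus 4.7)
The plan is to prove the bound via a reduction-mod-sublattice pigeonhole argument, charging pairs of antipodal lattice points to a single residue class in $\lat / k\lat$.

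Let $k := \lceil 2r \rceil$ and write $S := \set{\vec{y} \in \lat : \|\vec{y}\| \leq r\lambda_1(\lat)}$, noting that $\vec{0} \in S$. First I would consider the reduction map $\phi : S \to \lat / k\lat$ and analyze its fibers. If $\vec{x}, \vec{y} \in S$ with $\phi(\vec{x}) = \phi(\vec{y})$, then $\vec{x} - \vec{y} = k\vec{z}$ for some $\vec{z} \in \lat$. The key inequality comes from comparing two bounds on $\|\vec{x} - \vec{y}\|$: from above, $\|\vec{x} - \vec{y}\| \leq \|\vec{x}\| + \|\vec{y}\| \leq 2r\lambda_1(\lat) \leq k\lambda_1(\lat)$, and from below (if $\vec{z} \neq \vec{0}$), $\|\vec{x} - \vec{y}\| = k\|\vec{z}\| \geq k\lambda_1(\lat)$.

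Next I would use the equality case. The chain of inequalities above forces $k = 2r$ (so $2r$ is an integer), $\|\vec{x}\| = \|\vec{y}\| = r\lambda_1(\lat)$, and equality in the triangle inequality, which together force $\vec{y} = -\vec{x}$. Thus each nonzero fiber of $\phi$ contains at most two elements of $S$, namely an antipodal pair $\{\vec{x}, -\vec{x}\}$. Moreover, $\phi^{-1}(\vec{0}) \cap S = \{\vec{0}\}$: any nonzero $\vec{y} \in S$ with $\phi(\vec{y}) = \vec{0}$ would satisfy $\|\vec{y}\| \geq k\lambda_1(\lat) > r\lambda_1(\lat)$, contradicting $\vec{y} \in S$.

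Finally I would count: the zero class contributes $1$ and the remaining at most $k^n - 1$ nonzero classes contribute at most $2$ each, so
\[
|S| \leq 1 + 2(k^n - 1) = 2\lceil 2r\rceil^n - 1.
\]
The main obstacle is the delicate boundary case when $2r$ is an integer; most of the argument would otherwise be a clean strict-inequality pigeonhole giving $|S| \leq k^n$, but the slack that allows for antipodal pairs when $k = 2r$ exactly is precisely what forces the factor of $2$ (and the $-1$ correction from the zero vector) in the stated bound, and this case must be handled via the triangle-inequality equality analysis above.
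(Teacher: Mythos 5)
The paper does not prove this theorem; it is quoted verbatim from Betke, Henk, and Wills~\cite{BHW93} and used as a black box, so there is no internal proof to compare against. Your pigeonhole argument via the quotient map $\phi\colon S\to\lat/k\lat$ with $k=\lceil 2r\rceil$ is correct. The two sandwiching bounds on $\|\vec{x}-\vec{y}\|$ do indeed force equality throughout whenever two distinct elements of $S$ share a nonzero coset, and the equality conditions ($2r=k$, both norms equal to $r\lambda_1(\lat)$, tightness of the triangle inequality) correctly force $\vec{y}=-\vec{x}$; since antipodes are unique, at most two elements of $S$ lie in any nonzero coset, and your handling of the zero coset (only $\vec{0}$) is also right. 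The count $1+2(k^n-1)=2\lceil 2r\rceil^n-1$ follows. (A small remark, not a flaw: when $k$ is odd, $\vec{x}$ and $-\vec{x}$ can never share a nonzero coset, so each nonzero coset in fact carries at most one point of $S$ and the bound is not tight in that regime; but your argument never claims tightness, only the upper bound, so this is harmless.) The argument in~\cite{BHW93} is in the same packing/pigeonhole spirit, though stated more generally in terms of all $n$ successive minima of a symmetric convex body; your version is a clean specialization to the Euclidean ball and $\lambda_1$ that suffices for the statement as used in the paper.
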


\begin{lemma}[{\cite[Theorem 2.2]{banaszczyk}}]
\label{lem:coveringlambda1}
For any lattice $\lat \subset \R^n$, $ \lambda_1(\lat^*) \cdot \mu(\lat) \leq n/2$.
\end{lemma}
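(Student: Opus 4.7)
My plan is to follow Banaszczyk's Fourier-analytic approach using Gaussians on lattices. Let $\rho_s(x) := e^{-\pi\|x\|^2/s^2}$ and $\rho := \rho_1$, extended to countable sets by summation; write $B := \{x \in \R^n : \|x\| \leq 1\}$ for the closed Euclidean unit ball. The pivotal identity is the Poisson summation formula applied to the shifted lattice $\lat - t$:
\[
\rho_s(\lat - t) \;=\; \frac{s^n}{\det\lat}\sum_{w\in\lat^*}\rho_{1/s}(w)\,e^{-2\pi i\langle w,t\rangle}.
\]
Taking real parts and isolating the $w=0$ term gives
\[
\rho_s(\lat - t) \;\geq\; \frac{s^n}{\det\lat}\bigl(1 - \rho_{1/s}(\lat^*\setminus\{0\})\bigr),
\]
while the same identity at $t=0$ yields $\rho_s(\lat) = (s^n/\det\lat)(1 + \rho_{1/s}(\lat^*\setminus\{0\}))$. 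Together these show that, up to an error controlled by the dual Gaussian tail $\rho_{1/s}(\lat^*\setminus\{0\})$, $\rho_s(\lat - t)$ is at least a constant fraction of $\rho_s(\lat)$ regardless of $t$.

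The second ingredient is Banaszczyk's sharp Gaussian-tail inequality: for any lattice $L \subset \R^n$, any $u \in \R^n$, and any $c \geq 1/\sqrt{2\pi}$,
\[
\rho\bigl((L+u)\setminus c\sqrt{n}\,B\bigr) \;\leq\; \bigl(c\sqrt{2\pi e}\,e^{-\pi c^2}\bigr)^{n}\rho(L),
\]
where the prefactor is strictly less than $1$ for $c > 1/\sqrt{2\pi}$. I would prove this by partitioning $L+u$ into thin spherical shells, applying the lattice-point count from Theorem~\ref{thm:lat-pt-bnd} (noting that a shifted ball of radius $r$ contains at most as many points as the ball $2rB$ centered at the origin) on each shell, and combining with the monotone decay of the Gaussian across shells. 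Applied to the rescaled dual $s^{-1}\lat^*$ with $u=0$, this shows that once $s\,\lambda_1(\lat^*)$ comfortably exceeds $\sqrt{n/(2\pi)}$, the dual tail $\rho_{1/s}(\lat^*\setminus\{0\})$ is exponentially small in $n$, so the Poisson lower bound becomes $\rho_s(\lat - t) \geq \rho_s(\lat)(1-o(1))$.

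Now suppose for contradiction that $\mu(\lat)\,\lambda_1(\lat^*) > n/2$, so some $t \in \spn(\lat)$ has $\dist(t,\lat) > n/(2\lambda_1(\lat^*))$. Then every point of $\lat - t$ lies outside the ball of radius $n/(2\lambda_1(\lat^*))$, i.e., outside $c\sqrt{n}\,s\,B$ for $c$ satisfying $c\sqrt{n}\,s = n/(2\lambda_1(\lat^*))$; applying the tail inequality a second time (now to the shifted lattice $\lat - t$) gives an upper bound on $\rho_s(\lat - t)$ that is exponentially small compared to $\rho_s(\lat)$. Tuning $s \sim 1/\lambda_1(\lat^*)$ so that both tail inequalities bite at their optimal thresholds simultaneously, this contradicts the lower bound $(1-o(1))\rho_s(\lat)$ from the previous step, forcing $\mu(\lat)\lambda_1(\lat^*)\leq n/2$. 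The hardest part will be arranging this balance with the sharp constant $n/2$: any slack in either tail bound degrades the right-hand side to $Cn$ for some $C > 1/2$, so the real work is in the sharp Gaussian-tail inequality rather than the Poisson identity itself.
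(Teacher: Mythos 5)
The paper does not prove this lemma; it imports it verbatim as Theorem~2.2 of Banaszczyk (1993), so the comparison must be against Banaszczyk's own argument. Your high-level outline --- Poisson summation on the shifted lattice plus a Gaussian tail bound --- is indeed Banaszczyk's engine, and the two displayed inequalities you extract from Poisson summation are correct.

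However, the step you propose for proving the tail inequality would not go through, and it is the crux of the whole argument. Banaszczyk establishes $\rho\bigl((L+u)\setminus c\sqrt{n}\,B\bigr)\leq\bigl(c\sqrt{2\pi e}\,e^{-\pi c^2}\bigr)^n\rho(L)$ not by counting points in shells but by exponential tilting: for $t>1$, multiply each summand by $t^{\|x\|^2-(c\sqrt n)^2}\geq 1$, absorb $t^{\|x\|^2}$ into the Gaussian to replace $\rho$ by $\rho_s$ for some $s>1$, apply Poisson summation a second time to get $\rho_s(L+u)\leq\rho_s(L)\leq s^n\rho(L)$, and then optimize over $t$. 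Two features of the resulting bound are essential: it is a ratio against $\rho(L)$ (not $\rho(L+u)$), and it is entirely independent of $\lambda_1(L)$. A shell decomposition built on Theorem~\ref{thm:lat-pt-bnd} cannot reproduce either: the count $2\ceil{2R/\lambda_1(L)}^n-1$ introduces a spurious dependence on $\lambda_1(L)$, and for the shifted set $L+u$ the origin-centered count gives you no control over how many shifted points sit in a shell about the origin. This is a missing idea, not a detail to be tidied up. Separately, even granting the tail bound, the balance you sketch yields $\lambda_1(\lat^*)\mu(\lat)=O(n)$ for large $n$ but does not produce the exact constant $n/2$ uniformly in $n$: at $n=1$, where $\lat=\Z$ attains equality, both factors $g(c)^n$ are of order one and the two Poisson-derived inequalities simply do not contradict each other. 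You rightly flag the sharp constant as the hard part, but the proposal stops short of supplying the extra argument Banaszczyk uses to close it, so as written this is a sketch of an $O(n)$ transference bound rather than of the lemma as stated.
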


We derive a simple though rather specific corollary of Lemma~\ref{lem:coveringlambda1} that we will use twice. The corollary says that a dual vector $\vec{w} \in \lat^* \setminus \{\vec0\}$ that is relatively short, $\| \vec{w} \| \leq \gamma \cdot \lambda_1(\lat^*)$, can be used to partition $\lat$ into $(n-1)$-dimensional lattice hyperplanes, such that the closest vector to any target $\vec{t}$ must lie in one of the $O(\gamma n)$ hyperplanes closest to $\vec{t}$.

\begin{corollary}
\label{cor:hyperplanes}
For any lattice $\lat \subset \R^n$ with basis $(\vec{b}_1,\ldots, \vec{b}_n)$ and associated dual basis $(\vec{b}_1^*,\ldots, \vec{b}_n^*)$, $\gamma \geq 1$, and any target $\vec{t} \in \R^n$, if $\|\vec{b}_1^*\| \leq \gamma \cdot \lambda_1(\lat^*)$, then any closest lattice vector to $\vec{t}$ must lie in a lattice hyperplane $\lat' + i \vec{b}_1$, where $\lat' := \lat(\vec{b}_2,\ldots, \vec{b}_n)$ and $i$ is an integer with $|i - \inner{\vec{b}_1^*, \vec{t}}| \leq \gamma n/2$.
\end{corollary}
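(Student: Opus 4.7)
The plan is to partition $\lat$ by the first coordinate in the basis $\basis$, use the dual basis property to identify which hyperplane contains a closest vector, and then bound the coordinate via the covering-radius bound from Lemma~\ref{lem:coveringlambda1}.

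First I would note that every $\vec{y} \in \lat$ has a unique expansion $\vec{y} = a_1 \vec{b}_1 + \cdots + a_n \vec{b}_n$ with $a_j \in \Z$, so $\lat = \bigsqcup_{i \in \Z} (\lat' + i \vec{b}_1)$. The duality $\inner{\vec{b}_1^*, \vec{b}_j} = \delta_{1,j}$ then gives $\inner{\vec{b}_1^*, \vec{y}} = a_1$, so a lattice point $\vec{y}$ lies in the hyperplane $\lat' + i \vec{b}_1$ if and only if $\inner{\vec{b}_1^*, \vec{y}} = i$. Thus fixing any closest lattice vector $\vec{x}$ to $\vec{t}$ and setting $i := \inner{\vec{b}_1^*, \vec{x}} \in \Z$ immediately places $\vec{x}$ in the claimed hyperplane; all that remains is to bound $|i - \inner{\vec{b}_1^*, \vec{t}}| = |\inner{\vec{b}_1^*, \vec{x} - \vec{t}}|$.

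Next I would split $\vec{t} = \vec{t}_\parallel + \vec{t}_\perp$ according to $\R^n = \spn(\lat) \oplus \spn(\lat)^\perp$. Since $\vec{x} \in \lat \subset \spn(\lat)$, we have $\|\vec{y} - \vec{t}\|^2 = \|\vec{y} - \vec{t}_\parallel\|^2 + \|\vec{t}_\perp\|^2$ for every $\vec{y} \in \lat$, so $\vec{x}$ is also a closest lattice vector to $\vec{t}_\parallel$ and hence $\|\vec{x} - \vec{t}_\parallel\| \leq \mu(\lat)$. Because $\vec{b}_1^* \in \spn(\lat)$ is orthogonal to $\vec{t}_\perp$, Cauchy--Schwarz then gives
\[
|\inner{\vec{b}_1^*, \vec{x} - \vec{t}}| = |\inner{\vec{b}_1^*, \vec{x} - \vec{t}_\parallel}| \leq \|\vec{b}_1^*\| \cdot \|\vec{x} - \vec{t}_\parallel\| \leq \gamma \lambda_1(\lat^*) \cdot \mu(\lat).
\]
Applying Lemma~\ref{lem:coveringlambda1} yields $\gamma \lambda_1(\lat^*) \cdot \mu(\lat) \leq \gamma n/2$, which is exactly the desired bound.

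There is essentially no main obstacle; the only subtle point is to justify that one may replace $\vec{t}$ by its projection $\vec{t}_\parallel$ onto $\spn(\lat)$ so that the covering-radius bound applies and the orthogonal component drops out of the inner product with $\vec{b}_1^* \in \spn(\lat)$. Everything else is a direct consequence of the defining property of the dual basis and of Banaszczyk's transference inequality already recorded as Lemma~\ref{lem:coveringlambda1}.
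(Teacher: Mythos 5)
Your proof is correct and follows essentially the same line as the paper's: identify the hyperplane index via $i = \inner{\vec{b}_1^*, \vec{x}}$, then bound $|i - \inner{\vec{b}_1^*, \vec{t}}|$ by Cauchy--Schwarz and Lemma~\ref{lem:coveringlambda1}. The only difference is that you explicitly decompose $\vec{t}$ into its components in $\spn(\lat)$ and $\spn(\lat)^\perp$; this extra care is unnecessary here because the corollary's hypotheses (a basis of $n$ vectors in $\R^n$) force $\lat$ to be full rank, so $\spn(\lat) = \R^n$ and $\vec{t}_\perp = \vec{0}$ automatically --- which is the convention the paper adopts throughout and why its one-line version skips the projection.
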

\begin{proof}
Let $\vec{y} \in \lat$ be a closest lattice vector to $\vec{t}$. It follows from the definition of a lattice that $\vec{y} \in \lat' + i\vec{b}_1$ for some integer $i= \inner{\vec{b}_1^*, \vec{y}}$. We have
\[
|i - \inner{\vec{b}_1^*, \vec{t}}| = |\inner{\vec{b}_1^*, \vec{y} - \vec{t}}| \leq \|\vec{b}_1^*\| \|\vec{y} - \vec{t}\| \leq \gamma \lambda_1(\lat^*) \cdot \mu(\lat) \leq \gamma n/2 \; ,
\]
where we have used Lemma~\ref{lem:coveringlambda1}.
\end{proof}

\subsection{LLL-reduced bases}

Given a basis, $\basis = (\vec{b}_1,\ldots, \vec{b}_n)$, we define its Gram-Schmidt orthogonalization $(\gs{\vec{b}}_1,\ldots, \gs{\vec{b}}_n)$ by
\[  \gs{\vec{b}}_i = \pi_{\{\vec{b}_1, \ldots, \vec{b}_{i-1} \}^\perp}(\vec{b}_i) \; , \]
and the Gram-Schmidt coefficients $\mu_{i,j}$ by
\[ \mu_{i,j}= \frac{\inner{\vec{b}_i, \gs{\vec{b}}_j}}{\length{\gs{\vec{b}}_j}^2}\; . \]
Here, $\pi_A$ represents orthogonal projection onto the subspace $A$ and $\{\vec{b}_1, \ldots, \vec{b}_{i-1} \}^\perp$ denotes the subspace of vectors in $\R^n$ that are orthogonal to $\vec{b}_1, \ldots, \vec{b}_{i-1}$.

\begin{definition}
A basis $\basis = (\vec{b}_1,\ldots, \vec{b}_n)$ is \emph{LLL-reduced} if 
\begin{enumerate}
\item for $1 \leq j < i \leq n$, $|\mu_{i,j}| \leq 1/2$; and
\item for $2 \leq i \leq n$, $\|\gs{\vec{b}}_i\|^2 \geq (3/4 - \mu_{i,i-1}^2) \cdot \| \gs{\vec{b}}_{i-1}\|^2$.
\end{enumerate}
\end{definition}

\begin{theorem}[{\cite{LLL82}}]
\label{thm:LLL}
There exists an efficient algorithm that takes as input a (basis for) a lattice and outputs an LLL-reduced basis for the lattice.
\end{theorem}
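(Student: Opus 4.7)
The plan is to exhibit the classical LLL algorithm and argue polynomial-time termination via a monovariant. The algorithm maintains a basis $\basis = (\vec{b}_1, \ldots, \vec{b}_n)$ together with its Gram--Schmidt data, and alternates two operations until fixpoint. The first is \emph{size reduction}: for $i = 2, \ldots, n$ and for $j = i-1, i-2, \ldots, 1$ in that order, replace $\vec{b}_i \leftarrow \vec{b}_i - \round{\mu_{i,j}} \vec{b}_j$; this leaves the lattice and every Gram--Schmidt vector $\gs{\vec{b}}_k$ unchanged while enforcing $|\mu_{i,j}| \le 1/2$ at the end (the order of $j$ is essential, since adjusting $\vec{b}_i$ by $\vec{b}_j$ changes the $\mu_{i,k}$ for $k \le j$). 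The second is a \emph{swap}: if some index $i$ violates the Lov{\'a}sz condition $\|\gs{\vec{b}}_i\|^2 \geq (3/4 - \mu_{i,i-1}^2)\|\gs{\vec{b}}_{i-1}\|^2$, swap $\vec{b}_{i-1}$ and $\vec{b}_i$ and return to size reduction; otherwise halt. At termination both defining properties of an LLL-reduced basis hold by construction.

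The heart of the proof is to bound the number of iterations. I would use the monovariant $D := \prod_{i=1}^{n} \det(\lat(\vec{b}_1, \ldots, \vec{b}_i))^2 = \prod_{i=1}^{n} \prod_{j=1}^{i} \|\gs{\vec{b}}_j\|^2$. Size reduction fixes every $\gs{\vec{b}}_j$, so it leaves $D$ invariant. A swap at position $i$ only affects $\det(\lat(\vec{b}_1, \ldots, \vec{b}_{i-1}))^2$: the new squared Gram--Schmidt length at position $i-1$ equals $\|\gs{\vec{b}}_i\|^2 + \mu_{i,i-1}^2 \|\gs{\vec{b}}_{i-1}\|^2 < (3/4)\|\gs{\vec{b}}_{i-1}\|^2$, precisely because the Lov{\'a}sz condition was violated, while the other $\gs{\vec{b}}_k$ are unchanged. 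Hence each swap multiplies $D$ by at most $3/4$. For an integer input basis, $D$ is a positive integer bounded below by $1$ and initially bounded above by $\prod_i \|\vec{b}_i\|^{2i} \le 2^{O(n^2 \log B)}$, where $B$ bounds the input bit length; this yields an $O(n^2 \log B)$ bound on the number of swaps, and rational inputs reduce to the integer case by clearing denominators.

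The main obstacle is bit complexity rather than the swap count: one must check that the entries of $\basis$ and the Gram--Schmidt quantities $\mu_{i,j}$ and $\|\gs{\vec{b}}_j\|^2$ remain of polynomial bit-length throughout, so that each size reduction and each swap can itself be carried out in polynomial time. I would handle this by showing that immediately after size reduction the coordinates of $\vec{b}_i$ in the Gram--Schmidt frame are controlled by the $\|\gs{\vec{b}}_j\|$ together with $|\mu_{i,j}|\le 1/2$; combined with the monotonicity of $D$ (which caps every partial determinant from above, and whose integrality caps each $\|\gs{\vec{b}}_j\|^2$ from below), this gives uniform polynomial bounds on every intermediate number the algorithm produces. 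Together with the swap bound from the previous paragraph, this yields the claimed polynomial runtime.
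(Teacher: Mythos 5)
The paper does not prove this statement; it is quoted as a known result with a citation to \cite{LLL82}. Your proposal correctly reproduces the standard argument from that reference: the alternation of size reduction and Lov\'asz swaps, the potential $D=\prod_{i}\det(\lat(\vec{b}_1,\ldots,\vec{b}_i))^2$ that is invariant under size reduction and decreases by a factor $\leq 3/4$ per swap, its integrality and lower bound of $1$ for integer bases, and the separate (and genuinely nontrivial) bit-complexity check that intermediate $\mu_{i,j}$ and $\|\gs{\vec{b}}_j\|^2$ stay polynomially bounded. This is the right proof and matches the cited source.
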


\begin{lemma}
\label{lem:LLLcoordinates}
For any lattice $\lat \subset \R^n$ with LLL-reduced basis $\basis =(\vec{b}_1,\ldots, \vec{b}_n)$ and $\vec{y} = \sum a_i \vec{b}_i \in \lat$, we have
\[
|a_i| \leq 2^{3n/2-i} \cdot \frac{\|\vec{y}\|}{\lambda_1(\lat)}
\; ,
\]
for all $i$.
\end{lemma}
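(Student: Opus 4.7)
My plan is to express $\vec{y}$ in the Gram--Schmidt basis, bound the Gram--Schmidt coefficients using orthogonality, and then translate back to the $\vec{b}_i$-coefficients $a_i$ by a downward induction on $i$, using the size-reduction property (LLL condition~1) to control the propagation of errors.

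First, I would establish a lower bound on $\|\gs{\vec{b}}_i\|$ in terms of $\lambda_1(\lat)$. The second LLL condition, together with $|\mu_{i,i-1}| \leq 1/2$, gives $\|\gs{\vec{b}}_i\|^2 \geq (1/2)\|\gs{\vec{b}}_{i-1}\|^2$, so iterating yields $\|\gs{\vec{b}}_i\| \geq 2^{-(i-1)/2}\|\gs{\vec{b}}_1\| = 2^{-(i-1)/2}\|\vec{b}_1\| \geq 2^{-(i-1)/2}\lambda_1(\lat)$, since $\vec{b}_1$ is a nonzero lattice vector.

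Next, writing $\vec{b}_j = \gs{\vec{b}}_j + \sum_{l<j} \mu_{j,l}\gs{\vec{b}}_l$ and expanding $\vec{y} = \sum_j a_j \vec{b}_j = \sum_i c_i \gs{\vec{b}}_i$, I get the identity $c_i = a_i + \sum_{j>i} a_j\mu_{j,i}$. Orthogonality of the $\gs{\vec{b}}_i$ gives $|c_i|\cdot \|\gs{\vec{b}}_i\| \leq \|\vec{y}\|$, so combining with the previous step,
\[
|c_i| \;\leq\; \frac{\|\vec{y}\|}{\|\gs{\vec{b}}_i\|} \;\leq\; 2^{(i-1)/2}\cdot\frac{\|\vec{y}\|}{\lambda_1(\lat)}.
\]
Rearranging the identity and using size-reduction ($|\mu_{j,i}|\leq 1/2$) yields the recurrence
\[
|a_i| \;\leq\; |c_i| + \tfrac{1}{2}\sum_{j>i}|a_j| \;\leq\; 2^{(i-1)/2}R + \tfrac{1}{2}\sum_{j>i}|a_j|,
\]
where $R := \|\vec{y}\|/\lambda_1(\lat)$.

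Finally, I would prove the target bound $|a_i|\leq 2^{3n/2-i}R$ by downward induction on $i$, starting from $|a_n|=|c_n|\leq 2^{(n-1)/2}R \leq 2^{n/2}R$. The inductive step reduces to checking
\[
2^{(i-1)/2} + \tfrac{1}{2}\sum_{j=i+1}^{n} 2^{3n/2-j} \;\leq\; 2^{3n/2-i},
\]
and the geometric sum on the left is bounded by $2^{3n/2-i-1}$, so the left side is at most $2^{(i-1)/2}+2^{3n/2-i-1}\leq 2\cdot 2^{3n/2-i-1}=2^{3n/2-i}$ for $i\leq n-1$. The calculation is routine; the only mildly subtle point is choosing the right exponent $3n/2-i$ so that the inductive step closes cleanly, which is why the proof combines a factor of $2^{(i-1)/2}$ coming from $\|\gs{\vec{b}}_i\|$ with an additional $2^{n-i}$ slack for the accumulated size-reduction errors from the higher coordinates.
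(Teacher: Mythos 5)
Your proposal is correct and follows essentially the same route as the paper: express $\vec{y}$ in the Gram--Schmidt basis, bound the $i$-th Gram--Schmidt coefficient $c_i$ by $\|\vec{y}\|/\|\gs{\vec{b}}_i\|$, and then do a downward induction on $i$ using the size-reduction bound $|\mu_{j,i}|\le 1/2$ to control the recurrence $|a_i|\le|c_i|+\tfrac12\sum_{j>i}|a_j|$. The only (cosmetic) difference is that you use the index-dependent lower bound $\|\gs{\vec{b}}_i\|\ge 2^{-(i-1)/2}\lambda_1(\lat)$, whereas the paper coarsens this to the uniform $\|\gs{\vec{b}}_i\|\ge 2^{-n/2}\lambda_1(\lat)$; both are more than enough to close the induction for the exponent $3n/2-i$.
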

\begin{proof}
It follows immediately from the definition of an LLL-reduced basis that $\|\gs{\vec{b}}_i\| \geq \|\vec{b}_1\| /2^{i/2} \geq \lambda_1(\lat)/2^{n/2}$ for all $i$. For each $i$, we have
\[
\|\vec{y}\| \geq \sum_{j=1}^{n} |a_j\mu_{j,i}|\cdot \|\gs{\vec{b}}_{i}\| = \Big( |a_i| -\sum_{j=i+1}^n|a_j\mu_{j,i}|\Big) \cdot \|\gs{\vec{b}}_{i}\| \geq \Big( |a_i| - \frac{1}{2}\sum_{j=i+1}^n|a_j|\Big) \cdot 2^{-n/2} \cdot \lambda_1(\lat)
\; .
\]
In particular, $ |a_n| \leq  2^{n/2} \cdot \|\vec{y}\|/ \lambda_1(\lat)$. We assume for induction that $|a_j| \leq 2^{3n/2-j} \cdot \|\vec{y}\|/\lambda_1(\lat)$ for all $j$ with $i < j \leq n$. Then, plugging in to the above, we have
\[
\|\vec{y}\| \geq  |a_i| \cdot 2^{-n/2} \cdot \lambda_1(\lat) - \sum_{j=i+1}^n2^{n-j-1} \| \vec{y}\| \geq |a_i| \cdot 2^{-n/2} \cdot \lambda_1(\lat) - 2^{n - i-1} \cdot \|\vec{y}\|
\; .
\]
The result follows by rearranging.
\end{proof}

\begin{lemma}[\cite{Bab86}]
\label{lem:babai_covering_radius}
If $\basis = (\vec{b}_1,\ldots, \vec{b}_n)$ is an LLL-reduced basis, then $\mu(\lat) \leq \sqrt{n} 2^{n/2-1} \cdot \|\gs{\vec{b}}_n\|$.
\end{lemma}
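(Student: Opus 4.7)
The plan is to prove this via Babai's nearest-plane argument, which shows that for every target $\vec{t} \in \spn(\lat)$ there is a lattice point $\vec{y} \in \lat$ such that $\vec{t} - \vec{y} = \sum_{i=1}^n c_i \gs{\vec{b}}_i$ with $|c_i| \leq 1/2$ for all $i$. I would establish this by induction on $n$: set $k := \round{\inner{\vec{t}, \gs{\vec{b}}_n}/\|\gs{\vec{b}}_n\|^2}$, note that $\vec{t} - k \vec{b}_n$ projects onto $\gs{\vec{b}}_n$ with a coefficient in $[-1/2, 1/2]$, and then apply the inductive hypothesis to the projection of $\vec{t} - k\vec{b}_n$ onto $\spn(\vec{b}_1, \ldots, \vec{b}_{n-1})$ using the sub-basis $(\vec{b}_1, \ldots, \vec{b}_{n-1})$, which is also LLL-reduced. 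This immediately yields $\dist(\vec{t}, \lat) \leq \|\vec{t} - \vec{y}\|$, and since the $\gs{\vec{b}}_i$ are orthogonal, Pythagoras gives
\[
\|\vec{t} - \vec{y}\|^2 = \sum_{i=1}^n c_i^2 \|\gs{\vec{b}}_i\|^2 \leq \frac{1}{4} \sum_{i=1}^n \|\gs{\vec{b}}_i\|^2 .
\]

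Next, I would use the second condition of LLL-reducedness, together with $|\mu_{i,i-1}| \leq 1/2$, to get $\|\gs{\vec{b}}_i\|^2 \geq \frac{1}{2} \|\gs{\vec{b}}_{i-1}\|^2$, and hence by induction $\|\gs{\vec{b}}_i\|^2 \leq 2^{n-i}\|\gs{\vec{b}}_n\|^2$ for all $i$. Plugging this in,
\[
\|\vec{t} - \vec{y}\|^2 \leq \frac{\|\gs{\vec{b}}_n\|^2}{4} \sum_{i=1}^n 2^{n-i} \leq \frac{2^n}{4} \|\gs{\vec{b}}_n\|^2 .
\]
Taking a square root gives $\|\vec{t} - \vec{y}\| \leq 2^{n/2 - 1}\|\gs{\vec{b}}_n\|$, which is in fact a little stronger than what the lemma asks for (there is an extra factor of $\sqrt{n}$ of slack in the statement). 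Since this holds for every $\vec{t} \in \spn(\lat)$, we obtain the claimed bound on $\mu(\lat)$.

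The main obstacle, such as it is, is just verifying that the nearest-plane recursion actually produces coefficients bounded by $1/2$ in the appropriate Gram--Schmidt expansion; this is where the size-reduction condition $|\mu_{i,j}| \leq 1/2$ plays no role (the rounding suffices on its own), and where one needs to be careful that the recursion is applied to the \emph{projected} target so that the choice of $k$ at each step does not disturb later coordinates. Everything else is routine.
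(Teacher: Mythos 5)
Your proof is correct, and since the paper simply cites this lemma to Babai~\cite{Bab86} without supplying a proof, there is nothing to compare it against beyond the standard argument, which is exactly what you give: the nearest-plane recursion yields $\vec{t}-\vec{y}=\sum c_i\gs{\vec{b}}_i$ with $|c_i|\le 1/2$, Pythagoras and the Lov\'asz condition $\|\gs{\vec{b}}_i\|^2\ge \tfrac12\|\gs{\vec{b}}_{i-1}\|^2$ give $\|\gs{\vec{b}}_i\|^2\le 2^{n-i}\|\gs{\vec{b}}_n\|^2$, and summing the geometric series bounds the distance by $2^{n/2-1}\|\gs{\vec{b}}_n\|$. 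You are also right that this is a factor of $\sqrt{n}$ sharper than what the lemma states (the $\sqrt{n}$ appears if one crudely bounds every $\|\gs{\vec{b}}_i\|$ by the maximum rather than summing the geometric series); the weaker bound is all the paper uses, so the slack is harmless. One small remark: you don't actually need the sub-basis $(\vec{b}_1,\ldots,\vec{b}_{n-1})$ to be LLL-reduced for the nearest-plane induction (that step works for any basis); LLL-reducedness is only invoked afterward to chain the Gram--Schmidt lengths, and there it is the Lov\'asz condition together with $|\mu_{i,i-1}|\le 1/2$ that does the work.
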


\subsection{Lattice problems}

We now list the computational problems that concern us. All of the below definitions are standard.

\begin{definition}
For any parameter $\gamma =\gamma(n) \geq 1$, $\gamma$-SVP (the Shortest Vector Problem) is the search problem defined as follows: The input is a basis $\basis$ for a lattice $\lat \subset \R^n$. The goal is to output a lattice vector $\vec{x}$ with $0 < \length{\vec{x}} \leq \gamma \lambda_1(\lat)$.
\end{definition}

\begin{definition}
For any parameter $\gamma = \gamma(n) \geq 1$, $\gamma$-CVP (the Closest Vector Problem) is the search problem defined as follows: The input is a basis $\basis$ for a lattice $\lat \subset \R^n$ and a target vector $\vec{t} \in \R^n$. The goal is to output a lattice vector $\vec{x}$ with $\length{\vec{x} - \vec{t}} \leq \gamma \dist(\vec{t}, \lat)$.
\end{definition}

\begin{definition}
For any parameter $\gamma = \gamma(n) \geq 1$, the decision problem $\gamma\text{-}\problem{GapSVP}$ is defined as follows: The input is a basis $\basis$ for a lattice $\lat \subset \R^n$ and a number $d > 0$. The goal is to output yes if $\lambda_1(\lat) < d$ and no if $\lambda_1(\lat) \geq \gamma \cdot d$.
\end{definition}

\begin{definition}
For any parameter $\gamma = \gamma(n) \geq 1$, the decision problem $\gamma\text{-}\problem{GapCVP}$ is defined as follows: The input is a basis $\basis$ for a lattice $\lat \subset \R^n$, a target $\vec{t} \in \R^n$, and a number $d > 0$. The goal is to output yes if $\dist(\vec{t},\lat) < d$ and no if $\dist(\vec{t},\lat) \geq \gamma \cdot d$.
\end{definition}

\begin{definition}
\label{def:uSVP}
For any parameter $\gamma = \gamma(n) \geq 1$, $\gamma$-uSVP (the Unique Shortest Vector Problem) is the search promise problem defined as follows: The input is a basis $\basis$ for a lattice $\lat \subset \R^n$ with $\lambda_2(\lat) \geq \gamma(n) \cdot \lambda_1(\lat)$. The goal is to output a lattice vector $\vec{x}$ with $\length{\vec{x}} = \lambda_1(\lat)$.
\end{definition}

\subsection{Known results}

We will need the following known reductions and hardness results.

\begin{theorem}[\cite{Khot05svp}]
\label{thm:SVPhard}
For any constant $\gamma \geq 1$, $\gamma$-GapSVP (and therefore $\gamma$-SVP) is NP-hard under randomized reductions.
\end{theorem}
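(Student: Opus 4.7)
The plan is to follow Khot's original strategy, which proceeds in two largely separate phases: first establish NP-hardness of $\gamma$-GapSVP for \emph{some} constant $\gamma_0 > 1$ (however small), and then boost the approximation factor to arbitrary constants via a gap-amplification step. Since any reduction from $\gamma$-GapSVP to $\gamma$-SVP is trivial (the search problem solves the decision problem), it suffices to prove hardness of the decision version.

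For the base hardness of $\gamma_0$-GapSVP, I would start from Ajtai-style hardness of the exact problem (or its direct randomized variant at factor $1 + 2^{-\mathrm{poly}(n)}$) and transport it to a constant factor using a lattice construction derived from a good linear code. The idea is to reduce from an NP-hard constraint satisfaction problem (say $\problem{GapSAT}$, via a PCP) or from $\problem{GapCVP}$, producing a lattice whose short vectors encode satisfying or near-satisfying assignments. The randomization enters when we choose a random element of a small coset in order to separate the ``YES'' instances from ``NO'' instances with high probability (so that a would-be SVP algorithm distinguishes the two cases).

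To boost the approximation factor from $\gamma_0$ to an arbitrary constant $\gamma$, I would iteratively tensor the lattice with itself: the tensor product $\lat^{\otimes k}$ lives in dimension $n^k$ and one hopes that $\lambda_1(\lat^{\otimes k}) = \lambda_1(\lat)^k$, which would amplify a factor-$\gamma_0$ gap into a factor-$\gamma_0^k$ gap. Since a single tensoring step preserves polynomial dimension (as $k$ is constant), this still yields a polynomial-time randomized reduction. The first main obstacle is that $\lambda_1$ of a tensor product can be \emph{strictly smaller} than the product of the $\lambda_1$'s, so the naive amplification fails for general lattices; this is precisely where Khot's construction must be tailored — for example, by using ``augmented'' tensor products or lattices coming from BCH codes, where one can prove a near-multiplicative lower bound on $\lambda_1$ of the tensor. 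The second obstacle is preserving the NO-case: one must show that if the original lattice had $\lambda_1$ large, then $\lambda_1(\lat^{\otimes k})$ remains large, which requires controlling short vectors in all projections simultaneously.

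Since this is cited from \cite{Khot05svp} rather than reproved, I would not attempt to re-derive the intricate code-theoretic and tensoring estimates; the hard technical work is precisely the $\lambda_1$-multiplicativity of the specific tensor construction, and I would simply invoke Khot's result rather than re-prove it. The takeaway for the present paper is only that the conclusion holds, so in practice I would cite \cite{Khot05svp} and, if desired, record the simple observation that hardness of $\gamma$-GapSVP trivially implies hardness of $\gamma$-SVP by outputting the length of any candidate short vector.
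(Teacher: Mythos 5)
The paper does not prove this theorem; it is stated as a black-box citation to Khot's work, exactly as you conclude you would do. Your high-level sketch of Khot's argument (base hardness for some fixed $\gamma_0>1$, then amplification via a tensoring-type construction, with the crux being a lower bound on $\lambda_1$ of the tensored lattice in the NO case) is accurate, and your observation that GapSVP hardness trivially transfers to SVP is correct, so the proposal matches the paper's treatment.
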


\begin{theorem}[\cite{GMSS99}]
\label{thm:SVPtoCVP}
For any $\gamma \geq 1$, there is an efficient dimension-preserving reduction from $\gamma$-GapSVP to $\gamma$-GapCVP (and from $\gamma$-SVP to $\gamma$-CVP).
\end{theorem}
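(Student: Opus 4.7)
The plan is to reduce an approximate SVP instance on a lattice $\lat$ with basis $\basis = (\vec{b}_1,\ldots,\vec{b}_n)$ to $n$ carefully chosen approximate CVP instances on rank-$n$ sublattices of $\lat$, and similarly for the decision versions. For each $i \in \{1,\ldots,n\}$ I would let
\[
\lat_i := \lat(\vec{b}_1, \ldots, \vec{b}_{i-1},\, 2\vec{b}_i,\, \vec{b}_{i+1}, \ldots, \vec{b}_n)
\]
and take the target $\vec{t}_i := \vec{b}_i$. Each $\lat_i$ is an index-$2$ sublattice of $\lat$ of the same rank, so the reduction produces $n$ instances in the same dimension and is clearly efficient.

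The heart of the argument is the identity
\[
\min_{1 \leq i \leq n} \dist(\vec{b}_i, \lat_i) \;=\; \lambda_1(\lat).
\]
For the ``$\geq$'' direction, note that every $\vec{w} \in \lat_i$ has even $i$-th coordinate in the $\basis$-expansion, so $\vec{w} - \vec{b}_i$ is a nonzero lattice vector (its $i$-th coordinate is odd) and hence $\|\vec{w} - \vec{b}_i\| \geq \lambda_1(\lat)$. For the ``$\leq$'' direction, fix a shortest nonzero $\vec{v} = \sum_j c_j \vec{b}_j$; since $\vec{v}/2 \notin \lat$ (else it would be a shorter nonzero vector), not all $c_j$ can be even, so pick an index $i$ with $c_i$ odd. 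Then $\vec{b}_i - \vec{v}$ lies in $\lat_i$ because its $i$-th coordinate $1 - c_i$ is even, giving $\dist(\vec{b}_i, \lat_i) \leq \|\vec{v}\| = \lambda_1(\lat)$.

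Given this identity, the decision reduction is immediate: on input $(\basis, d)$ I would call the $\gamma$-GapCVP oracle on each $(\lat_i, \vec{b}_i, d)$ and accept iff any of them accepts. Completeness ($\lambda_1(\lat) < d \Rightarrow$ some $\dist(\vec{b}_i,\lat_i) < d$) and soundness ($\lambda_1(\lat) \geq \gamma d \Rightarrow$ every $\dist(\vec{b}_i, \lat_i) \geq \gamma d$) both follow directly. For the search reduction, I would query the $\gamma$-CVP oracle on each $(\lat_i, \vec{b}_i)$ to obtain $\vec{x}_i \in \lat_i$ with $\|\vec{x}_i - \vec{b}_i\| \leq \gamma \dist(\vec{b}_i, \lat_i)$, set $\vec{y}_i := \vec{b}_i - \vec{x}_i$, and return the shortest such vector. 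The same parity argument guarantees each $\vec{y}_i \in \lat \setminus \{\vec{0}\}$, and picking the $i$ attaining the minimum in the identity yields a $\vec{y}_i$ of length at most $\gamma \lambda_1(\lat)$.

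There is no real obstacle here beyond being careful about the parity argument: all of it follows from observing that $\lat_i$ is precisely the kernel of the $\bmod~2$ linear functional $\sum_j a_j \vec{b}_j \mapsto a_i \bmod 2$ on $\lat$. Once this is in hand, both directions of the identity and the nonzero-ness of the returned $\vec{y}_i$ are one-line consequences.
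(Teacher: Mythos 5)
The paper cites this result from~\cite{GMSS99} without giving its own proof, so there is no in-paper argument to compare against. Your reconstruction is correct and is essentially the classical Goldreich--Micciancio--Safra--Seifert reduction: the index-$2$ sublattices $\lat_i$, the targets $\vec{b}_i$, the parity argument showing $\min_i \dist(\vec{b}_i,\lat_i) = \lambda_1(\lat)$, and the observation that each returned difference $\vec{b}_i - \vec{x}_i$ is a nonzero vector of $\lat$ are exactly the ingredients of the original reduction, and they carry over to both the search and gap versions as you describe.
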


\begin{theorem}[{\cite[Theorem 6.1]{LM09}}]
\label{thm:uSVPtoGapSVP}
For any $1 \leq \gamma(n) \leq \poly(n)$, there is an efficient dimension-preserving reduction from $\gamma$-uSVP to $\gamma$-GapSVP.
\end{theorem}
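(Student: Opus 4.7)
The plan is to recover the unique shortest vector $\vec{s}$ of $\lat$ by reading off its coordinates digit-by-digit in some base $p$, using the $\gamma$-GapSVP oracle to test carefully chosen index-$p$ sublattices of $\lat$. This generalizes the classical bit-by-bit reduction from exact SVP to exact GapSVP (the $\gamma = 1$ case) to the approximate regime via lattice sparsification.

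First, compute an LLL-reduced basis $\basis = (\vec{b}_1, \ldots, \vec{b}_n)$ of $\lat$ by Theorem~\ref{thm:LLL} and a $\gamma$-approximation $d^*$ to $\lambda_1(\lat)$ by binary search with the oracle. Writing $\vec{s} = \sum_i a_i \vec{b}_i$, Lemma~\ref{lem:LLLcoordinates} bounds $|a_i| \leq 2^{3n/2}$, so each $a_i$ is described by $O(n/\log p)$ base-$p$ digits. Pick a prime $p$ with $\gamma < p \leq 2\gamma + 3$ (polynomial in $n$ by Bertrand's postulate), and recover each digit via oracle queries on sublattices
\[
\lat_{\vec{w}} := \{\vec{y} \in \lat : \inner{\vec{w}, \vec{y}} \equiv 0 \pmod{p}\}
\]
for dual vectors $\vec{w} \in \lat^*$. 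The uSVP promise combined with $p > \gamma$ yields the key dichotomy: if $\vec{s} \in \lat_{\vec{w}}$ then $\lambda_1(\lat_{\vec{w}}) = \lambda_1(\lat)$, whereas if $\vec{s} \notin \lat_{\vec{w}}$ then every non-zero $\vec{y} \in \lat_{\vec{w}}$ is either a multiple $k\vec{s}$ with $p \mid k$ (so $\|\vec{y}\| \geq p \|\vec{s}\| > \gamma \|\vec{s}\|$) or linearly independent from $\vec{s}$ (so $\|\vec{y}\| \geq \lambda_2(\lat) \geq \gamma \|\vec{s}\|$). By choosing $\vec{w}$ from a suitable family---e.g., $\vec{w} = \vec{b}_i^* + c \vec{u}$ for a fixed reference $\vec{u} \in \lat^*$ and varying $c \in \{0, \ldots, p-1\}$, with previously-recovered digits used to pin down $\inner{\vec{u}, \vec{s}} \pmod p$---one extracts each digit of $\vec{s}$ in turn, for a total of $\poly(n)$ oracle calls on lattices of the same dimension $n$.

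The main obstacle is calibrating the threshold $d$ in each oracle query: the dichotomy gives $\lambda_1(\lat_{\vec{w}}) = \lambda_1(\lat)$ versus $\lambda_1(\lat_{\vec{w}}) \geq \gamma \lambda_1(\lat)$, so a $\gamma$-GapSVP oracle distinguishes the two cases cleanly only for $d$ in the narrow open interval $(\lambda_1(\lat), p \lambda_1(\lat)/\gamma)$, while the preprocessing only pins $\lambda_1(\lat)$ down to a factor of $\gamma$. I would resolve this by sharpening $d^*$ via a fine geometric grid search, using the uSVP structure itself as a consistency check: for an incorrect threshold the oracle's answers on a family of test sublattices become internally contradictory (the recovered digits fail to assemble into a vector of length $\leq d$), while for a correct threshold they determine a self-consistent shortest vector that we can verify. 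A further subtlety is the boundary case $\lambda_2(\lat) = \gamma \lambda_1(\lat)$, in which the second branch of the dichotomy holds non-strictly; this can be handled by an arbitrarily small rational perturbation of $\lat$ that restores strict slack without changing the identity of $\vec{s}$, or by a continuity argument on the oracle's guaranteed behavior in its ambiguous region.
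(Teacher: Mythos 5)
The paper cites this theorem from~\cite{LM09} without proof, so there is no proof in the paper to compare against directly; but the paper's own deterministic \scarequotes{bit-by-bit} reduction (Theorem~\ref{thm:SVPdet}) is explicitly modeled on~\cite{LM09}, and that together with the discussion in the \scarequotes{Related work} section gives the intended argument. Your high-level plan --- recover $\vec{s}$ digit-by-digit by testing membership in index-$p$ sublattices, using the uSVP promise to obtain the dichotomy $\lambda_1(\lat_{\vec{w}}) = \lambda_1(\lat)$ versus $\lambda_1(\lat_{\vec{w}}) \geq \gamma \lambda_1(\lat)$ --- is the right one, and you prove the dichotomy correctly.

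The central step, however, has a genuine error. You assert that the $\gamma$-GapSVP oracle cleanly separates the two branches for any threshold $d$ in $(\lambda_1(\lat),\, p\lambda_1(\lat)/\gamma)$. That would be true only if the \scarequotes{bad} sublattices satisfied $\lambda_1(\lat_{\vec{w}}) \geq p\lambda_1(\lat)$, which holds for the branch $\vec{y} = k\vec{s}$ with $p \mid k$, but not for the linearly-independent branch, where you correctly only get $\lambda_1(\lat_{\vec{w}}) \geq \lambda_2(\lat) \geq \gamma \lambda_1(\lat)$. To force a NO on a bad sublattice the oracle needs $\lambda_1(\lat_{\vec{w}}) \geq \gamma d$, which in the worst case requires $d \leq \lambda_1(\lat)$; to force a YES on the good sublattice you need $d > \lambda_1(\lat)$. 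These are incompatible, so \emph{no} threshold yields a clean YES/NO separation, and this has nothing to do with not knowing $\lambda_1(\lat)$ precisely enough. Consequently your proposed fix --- a grid search over sharpened thresholds $d^*$ with a consistency check --- is attacking a misdiagnosed obstruction and won't close the gap.

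The way the argument actually goes through (and what the paper means by \scarequotes{the promise \ldots guarantees that the $\gamma$-GapSVP oracle essentially behaves as an exact GapSVP oracle}) is to use the oracle in its estimation form (as in Section 2.5 of the paper) rather than picking a single threshold: for each of the $p+1$ candidate index-$p$ sublattices $\lat_j$, compute an estimate $d_j$ with $\lambda_1(\lat_j) \leq d_j \leq \gamma \lambda_1(\lat_j)$, and take the sublattice minimizing $d_j$. The sublattice containing $\vec{s}$ gives $d_j \leq \gamma\lambda_1(\lat)$, while (assuming $\lambda_2(\lat) > \gamma\lambda_1(\lat)$ strictly) every other sublattice gives $d_j \geq \lambda_1(\lat_j) > \gamma\lambda_1(\lat)$. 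So the minimizer necessarily contains $\vec{s}$, inherits $\lambda_2 \geq \lambda_2(\lat) > \gamma\lambda_1$, and is again a valid $\gamma$-uSVP instance with the same shortest vector; this is the invariant that makes the recursion work, and it is exactly the comparison Theorem~\ref{thm:SVPdet} performs. The separation comes from the gap in the \emph{possible values} of $\lambda_1(\lat_j)$, not from a cleverly chosen threshold. The boundary case $\lambda_2 = \gamma\lambda_1$ that you flag is a real nuisance, and your two suggested remedies (perturbation, continuity) are not worked out; in particular a generic small perturbation can \emph{break} the uSVP promise rather than restore slack, so some care is needed there. Finally, your family $\vec{w} = \vec{b}_i^* + c\vec{u}$ only determines $\vec{a} \bmod p$ up to an unknown scalar, and you still need to control coordinate sizes through $\Theta(n/\log p)$ levels of refinement; Theorem~\ref{thm:SVPdet} handles this bookkeeping by refining only two coordinates at a time and invoking Lemma~\ref{lem:LLLcoordinates} against a single initial LLL-reduced basis, and your proposal would need an analogous mechanism to be complete.
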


\begin{theorem}[{\cite[Theorem 4.2]{MicciancioBook}}]
\label{thm:cvptosvp}
There  is an efficient reduction from $\sqrt{n}$-CVP to $\sqrt{2}$-SVP.
Furthermore, all of the oracle calls of the reduction are made in dimension $n+1$, where $n$ is the input dimension.
\end{theorem}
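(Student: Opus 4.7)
The plan is to apply Kannan's embedding. Given a rank-$n$ lattice $\lat$ with basis $\basis = (\vec{b}_1,\ldots,\vec{b}_n)$ and a target $\vec{t}\in\R^n$, I form the rank-$(n+1)$ lattice $\lat'_h$ with basis
\[
\basis'_h = \begin{pmatrix} \basis & \vec{t} \\ \vec{0} & h \end{pmatrix},
\]
for a parameter $h>0$. Every vector of $\lat'_h$ has the form $(\vec{y}+k\vec{t},\,kh)$ for some $\vec{y}\in\lat$ and $k\in\Z$, and the witness $\vec{y}_0-\vec{t}$ (where $\vec{y}_0\in\lat$ is an exact closest vector to $\vec{t}$) paired with $k=-1$ shows that $\lambda_1(\lat'_h)\leq\sqrt{d^2+h^2}$, where $d := \dist(\vec{t},\lat)$.

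First I would use LLL (Theorem~\ref{thm:LLL}) and Babai's nearest-plane algorithm (Lemma~\ref{lem:babai_covering_radius}) to bracket $d$ up to a $2^{O(n)}$ multiplicative factor, producing a $\poly(n)$-size geometric grid of candidate values $h_1<h_2<\cdots<h_N$ such that at least one $h_j$ falls in a narrow range near $d$ (e.g., $h_j\in[\sqrt{3/2}\,d,\sqrt{2}\,d]$). For each $h_j$ I would call the $\sqrt{2}$-SVP oracle on $\lat'_{h_j}$ to obtain $\vec{v}_j=(\vec{y}_j+k_j\vec{t},\,k_j h_j)$ with $\|\vec{v}_j\|\leq\sqrt{2(d^2+h_j^2)}$. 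The last-coordinate constraint $|k_j|h_j\leq\|\vec{v}_j\|$ forces $|k_j|\leq 1$ for $h_j$ in the narrow range; in the ``good'' case $|k_j|=1$, the spatial part satisfies $\|\vec{y}_j+k_j\vec{t}\|^2 \leq 2(d^2+h_j^2)-h_j^2\leq 2d^2+h_j^2\leq 4d^2$, so $-k_j\vec{y}_j\in\lat$ is an $O(1)$-approximate closest vector to $\vec{t}$, which in particular is a $\sqrt{n}$-approximate closest vector for all $n\geq 4$.

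The hard part will be handling the degenerate case $k_j=0$, in which the oracle merely returns a short vector of $\lat\times\{0\}\subset\lat'_{h_j}$ carrying no information about $\vec{t}$. The plan is to exploit the $\sqrt{n}$-approximation slack: a $k_j=0$ output certifies $\lambda_1(\lat)\leq\sqrt{2}\lambda_1(\lat'_{h_j})\leq 2d$, and the recovered short lattice vector $\vec{s}\in\lat$ can be used to ``peel off'' one dimension by projecting both $\lat$ and $\vec{t}$ onto $\vec{s}^\perp$ and recursing on the resulting rank-$(n-1)$ instance, then lifting the recursive solution by Babai-style rounding in the removed $\vec{s}$-direction. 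Each level of recursion contributes at most $\|\vec{s}\|^2/4\leq d^2$ to the squared error (by orthogonality of projection, and since the CVP distance does not increase under projection), so after at most $n$ levels the accumulated error is bounded by $\sqrt{n}\cdot d$. Padding the projected problem back to dimension $n+1$ via direct sum with an auxiliary copy of $\Z$ before invoking the oracle ensures that every oracle call is made in dimension $n+1$, as required. The technically delicate part of the argument will be balancing the geometric grid of $h$'s against the recursion depth so that the telescoping bound is precisely $\sqrt{n}$, and verifying that the lifting step preserves the quadrature accounting at every level.
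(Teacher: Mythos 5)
The paper does not prove this theorem---it cites it directly from Micciancio and Goldwasser's book---so there is no in-paper proof to compare against. Your overall plan (Kannan embedding plus recursion on the degenerate $k=0$ case) is the standard route and is plausible, but your quadrature accounting contains a concrete error that prevents it from reaching the stated $\sqrt{n}$ factor.

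The problem is the claim that a $k_j = 0$ output certifies $\lambda_1(\lat) \leq 2d$ and that therefore $\|\vec{s}\|^2/4 \leq d^2$. With $h_j \in [\sqrt{3/2}\,d,\sqrt{2}\,d]$ you only get $\|\vec{s}\| \leq \sqrt{2}\,\lambda_1(\lat'_{h_j}) \leq \sqrt{2(d^2 + h_j^2)} \leq \sqrt{6}\,d$, so $\|\vec{s}\|^2/4 \leq (3/2)\,d^2$, not $d^2$. Worse, this is not fixable by shrinking the grid toward $d$: ruling out $|k_j| \geq 2$ requires $h_j > d$ (otherwise $|k_j|h_j \leq \|\vec{v}_j\| \leq \sqrt{2(d^2+h_j^2)}$ permits $|k_j|=2$), and for any $h > d$ you have $\|\vec{s}\|^2/4 \leq (d^2+h^2)/2 > d^2$. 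So the per-level contribution strictly exceeds $d^2$, and the ``good''-case endpoint contributes roughly $2d^2 + h^2 > 3d^2$. The telescoping sum then lands at best at $\sqrt{n + 2 + o(1)}\cdot d$, not $\sqrt{n}\cdot d$. You flag this as ``technically delicate,'' but as written the constraints on $h$ pull in opposite directions and the bound does not close; one needs either a slightly different treatment of the $|k|=2$ case (note it forces $\vec{y} + 2\vec{t} = \vec{0}$, hence $2\vec{t}\in\lat$, which can be handled separately), a sharper recursion invariant, or acceptance of a slightly weaker constant. A secondary loose end: you ``pad'' the projected instance back to dimension $n+1$ with an auxiliary $\Z$ summand, but for this not to corrupt the oracle you must scale the summand strictly above $\sqrt{2}\,\lambda_1$ of the embedded lattice, which itself depends on the unknown $d$; this needs to be anchored to an LLL estimate rather than hand-waved.
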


Reductions like that of Theorem~\ref{thm:cvptosvp} that increase the dimension by one are good enough for nearly all applications of perfectly dimension-preserving reductions. But, we can use a simple idea to convert Theorem~\ref{thm:cvptosvp} into a reduction that preserves the dimension exactly. (Micciancio uses essentially the same trick in the proof of~\cite[Corollary 7]{Micciancio08}.)

\begin{corollary}
\label{cor:cvptosvp}
There is a dimension-preserving efficient reduction from $\sqrt{n}$-CVP to $\sqrt{2}$-SVP.
\end{corollary}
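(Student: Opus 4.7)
The plan is to apply Theorem~\ref{thm:cvptosvp} not to the given $n$-dimensional CVP instance itself, but to a small collection of rank-$(n{-}1)$ CVP subproblems. Since Theorem~\ref{thm:cvptosvp} on a rank-$(n{-}1)$ input makes all of its $\sqrt{2}$-SVP oracle calls in dimension $(n{-}1)+1 = n$, this keeps the overall reduction dimension-preserving. The geometric tool that makes this work is Corollary~\ref{cor:hyperplanes}: a short dual vector of $\lat$ partitions $\lat$ into lattice hyperplanes in such a way that some exact closest vector to $\vec{t}$ must lie in one of only $O(n)$ of them.

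Concretely, I would first invoke the $\sqrt{2}$-SVP oracle on $\lat^*$ (which has the same dimension $n$) to obtain $\vec{w} \in \lat^* \setminus \{\vec{0}\}$ with $\|\vec{w}\| \leq \sqrt{2}\,\lambda_1(\lat^*)$. Such a $\vec{w}$ is automatically primitive: if $\vec{w} = k \vec{w}_0$ for some integer $k \geq 2$ and $\vec{w}_0 \in \lat^*$, then $\|\vec{w}_0\| = \|\vec{w}\|/k \leq \lambda_1(\lat^*)/\sqrt{2}$, contradicting $\|\vec{w}_0\| \geq \lambda_1(\lat^*)$. I can therefore extend $\vec{w}$ to a basis of $\lat^*$ (via HNF, say) and dualize to obtain a primal basis $(\vec{b}_1,\ldots,\vec{b}_n)$ of $\lat$ with $\vec{b}_1^* = \vec{w}$. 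By Corollary~\ref{cor:hyperplanes} applied with $\gamma = \sqrt{2}$, some exact closest lattice vector to $\vec{t}$ lies in one of the affine sublattices $H_i := \lat_0 + i \vec{b}_1$, where $\lat_0 := \lat(\vec{b}_2,\ldots,\vec{b}_n)$ has rank $n{-}1$ and $i$ ranges over the $O(n)$ integers with $|i - \inner{\vec{b}_1^*, \vec{t}}| \leq \sqrt{2}\,n/2$. For each such $i$, I would solve the $(n{-}1)$-dimensional CVP problem of finding a vector of $\lat_0$ close to $\vec{t} - i \vec{b}_1$ (orthogonally projecting the target onto $\spn(\lat_0)$ beforehand, so that the inner instance is well-posed), by invoking Theorem~\ref{thm:cvptosvp} to reduce it to a sequence of $\sqrt{2}$-SVP queries in dimension $n$. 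Finally, output the candidate closest to $\vec{t}$ across all $O(n)$ hyperplanes.

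For the approximation guarantee, focus on the correct hyperplane $H^* = H_{i^*}$ containing an exact closest vector $\vec{y}_{\text{opt}}$, and set $r := \dist(\vec{t}, \lat)$. Decompose $\vec{t} - i^* \vec{b}_1 = \vec{t}_\parallel + \vec{t}_\perp$ with $\vec{t}_\parallel \in \spn(\lat_0)$ and $\vec{t}_\perp \perp \spn(\lat_0)$, and let $a := \dist(\vec{t}_\parallel, \lat_0)^2$, $b := \|\vec{t}_\perp\|^2$, so that $r^2 = a + b$. The $\sqrt{n-1}$-CVP call on $\lat_0$ returns $\vec{v} \in \lat_0$ with $\|\vec{v} - \vec{t}_\parallel\|^2 \leq (n-1)\,a$; hence, by the Pythagorean theorem,
\[
\|(\vec{v} + i^* \vec{b}_1) - \vec{t}\|^2 \;=\; \|\vec{v} - \vec{t}_\parallel\|^2 + b \;\leq\; (n-1)\,a + b \;\leq\; n\,(a+b) \;=\; n\,r^2,
\]
which is the desired $\sqrt{n}$-approximation after lifting back by $i^* \vec{b}_1$. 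The main (minor) obstacles are purely bookkeeping: ensuring that $\vec{w}$ is primitive so that Corollary~\ref{cor:hyperplanes} applies, and properly projecting each inner target onto the span of its lattice before feeding it into Theorem~\ref{thm:cvptosvp}; both are handled as above.
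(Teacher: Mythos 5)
Your proposal is correct and follows essentially the same route as the paper's own proof: use the $\sqrt{2}$-SVP oracle on $\lat^*$ to get a short primitive dual vector, apply Corollary~\ref{cor:hyperplanes} to restrict attention to $O(n)$ affine hyperplanes, and run the rank-$(n-1)$ reduction of Theorem~\ref{thm:cvptosvp} on each shifted target. You fill in slightly more detail than the paper does (the explicit primitivity check for $\vec{w}$ and the Pythagorean computation $(n-1)a + b \leq n(a+b)$ showing the $\sqrt{n-1}$-approximation in $\spn(\lat_0)$ lifts to a $\sqrt{n}$-approximation in $\R^n$), but the structure and key ideas are identical.
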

\begin{proof}
	On input $\vec{t} \in \R^n$ and a lattice $\lat \subset \R^n$, the reduction first uses its SVP oracle to find a vector $\vec{b}_1^* \in \lat^*$ in the dual with $0 < \length{\vec{b}_1^*} < 2\lambda_1(\lat^*)$. Let $\basis^* := (\vec{b}_1^*, \ldots, \vec{b}_n^*)$ be a basis for $\lat^*$, and let $\basis = (\vec{b}_1,\ldots, \vec{b}_n)$ be the associated primal basis. (Since $\vec{b}_1^*$ is a primitive lattice vector, it is always possible to find such a basis.) Let $\lat' := \lat(\vec{b}_2,\ldots, \vec{b}_n)$ be the lattice generated by $\basis$ with the first basis vector removed. Finally, let $a := \inner{\vec{b}_1^*, \vec{t}}$. For $i = \floor{a}-n, \ldots, \ceil{a} + n$, the reduction runs the procedure from Theorem~\ref{thm:cvptosvp} on input $\vec{t} - i \vec{b}_1$ and $\lat'$, receiving as output $\vec{y}_i \in \lat'$. The reduction then simply outputs a closest vector to $\vec{t}$ amongst the vectors $\vec{y}_i + i \vec{b}_1 \in \lat$.
	
It is clear that the reduction is efficient. Furthermore, note that the reduction only uses the procedure from Theorem~\ref{thm:cvptosvp} with $(n-1)$-dimensional input. (Formally, we must project $\lat'$ and $\vec{t} - i \vec{b}_1$ onto $\spn(\lat') \cong \R^{n-1}$.) Since that procedure increases dimension by one, this new reduction preserves dimension. For correctness, we note that Corollary~\ref{cor:hyperplanes} implies that there is a closest vector to $\vec{t}$ in one of the lattice hyperplanes $\lat' + i \vec{y}_i$. The result then follows from Theorem~\ref{thm:cvptosvp}.
\end{proof}

\subsection{A note on decision and estimation}

Formally, we consider \emph{gapped decision problems}, which take as input a number $d > 0$ and some additional input $I$ and require us to output YES if $f(I) \leq d$ and NO if $f(I) > \gamma d$, where $f$ is some function and $\gamma$ is the approximation factor. (For example, $I$ may be some representation of a lattice and $f(I)$ may be the length of the shortest vector in the lattice.) However, it is sometimes convenient to work with \emph{estimation problems}, which take only $I$ as input and ask for a numerical output $\tilde{d}$ with $f(I) \leq \tilde{d} \leq \gamma f(I)$. 

For the specific problems that we consider (and most \scarequotes{sufficiently nice} problems), the estimation variants are equivalent to the gapped decision problems as long as the lattice is \scarequotes{represented reasonably} by the input. For example, if $f(I)$ can be represented as a string of length at most $\poly(|I|)$ (e.g., $f(I)$ might be a rational number with bounded numerator and denominator), then we can use binary search and a gapped decision oracle to estimate $f(I)$ efficiently. This is true, for example, whenever the input is interpreted as a list of vectors with rational coordinates, using the standard representation of rational numbers. (See~\cite{MicciancioBook} for a careful discussion of this and related issues in the context of lattice problems.) We therefore make no distinction between gapped decision problems and estimation problems in the sequel, without worrying about the specific form of our input, or more generally, the specific representation of numbers.

\section{Reducing SVP to uSVP (and GapSVP) via sparsification}

\label{sec:SVP}

\subsection{Sparsification}

For a lattice $\lat \subset \R^n$, we write $\lat_{\mathrm{prim}}$ for the set of all primitive vectors in $\lat$, and $\xi(\lat, r) := |\lat_{\mathrm{prim}} \cap r B_2^n|/2 $ for the number of primitive lattice vectors contained in a (closed) ball of radius $r$ around the origin (counting $\vec{x}$ and $-\vec{x}$ as a single vector). The following theorem from~\cite{DGStoSVP} shows that sparsification behaves nicely with respect to primitive vectors, which is enough for our use case.

\begin{theorem}[{\cite[Theorem 4.1]{DGStoSVP}}]
\label{thm:sparsification}
For any lattice $\lat \subset \R^n$ with basis $\basis$, primitive lattice vectors $\vec{y}_0, \vec{y}_1,\ldots, \vec{y}_N \in \lat_{\mathrm{prim}}$ with $\vec{y}_i \neq \pm \vec{y}_0$ for all $i > 0$, and prime $p \geq 101$, if $\xi(\lat, \length{\vec{y}_i}) \leq p/(20 \log p)$ for all $i$, then
\[
\frac{1}{p} - \frac{N}{p^2} \leq \Pr\big[\inner{\vec{z}, \basis^{-1}\vec{y}_0} \equiv 0 \bmod p \text{ and } \inner{\vec{z}, \basis^{-1}\vec{y}_i} \not\equiv 0 \bmod p \ \forall i > 0\big] \leq \frac{1}{p} \; ,
\]
where $\vec{z} \in \Z_p^n$ is chosen uniformly at random.
\end{theorem}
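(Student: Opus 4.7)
The plan is to handle the upper and lower bounds on the target probability separately. Writing $A_i := \{\inner{\vec{z}, \basis^{-1}\vec{y}_i} \equiv 0 \pmod p\}$, the event of interest is $A_0 \cap \bigcap_{i \geq 1} A_i^c$. For the upper bound, I would observe that primitivity of $\vec{y}_0$ forces the integer vector $\basis^{-1}\vec{y}_0$ to have coordinate gcd equal to $1$, so its reduction modulo $p$ is nonzero in $\Z_p^n$; hence $\inner{\vec{z}, \basis^{-1}\vec{y}_0} \bmod p$ is uniform on $\Z_p$, which gives $\Pr[A_0] = 1/p$ and immediately $\Pr[A_0 \cap \bigcap_{i \geq 1} A_i^c] \leq \Pr[A_0] = 1/p$.

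For the lower bound, I would apply a union bound:
\[
\Pr\Big[A_0 \cap \bigcap_{i=1}^N A_i^c\Big] \;\geq\; \Pr[A_0] - \sum_{i=1}^N \Pr[A_0 \cap A_i] \;=\; \frac{1}{p} \;-\; \sum_{i=1}^N \Pr[A_0 \cap A_i],
\]
so it suffices to show $\Pr[A_0 \cap A_i] \leq 1/p^2$ for every $i \geq 1$. Let $\bar{\vec{u}}_j \in \Z_p^n$ denote the reduction of $\basis^{-1}\vec{y}_j$ modulo $p$. If $\bar{\vec{u}}_0$ and $\bar{\vec{u}}_i$ are linearly independent over $\Z_p$, then the pair $(\inner{\vec{z},\bar{\vec{u}}_0}, \inner{\vec{z},\bar{\vec{u}}_i})$ is jointly uniform on $\Z_p^2$ and $\Pr[A_0 \cap A_i] = 1/p^2$ exactly. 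Otherwise $\bar{\vec{u}}_i = \lambda \bar{\vec{u}}_0$ for some $\lambda \in \Z_p$; primitivity of $\vec{y}_i$ rules out $\lambda = 0$, so $\lambda$ is a unit in $\Z_p$, $A_0 = A_i$, and $\Pr[A_0 \cap A_i] = 1/p$---this is the bad case that the hypothesis must exclude, since a single parallel companion would already push the target probability to zero.

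The heart of the proof is thus to show that the hypothesis $\xi(\lat, \|\vec{y}_i\|) \leq p/(20\log p)$ rules out the parallel case: no primitive $\vec{y}_i \neq \pm \vec{y}_0$ satisfies $\vec{y}_i \equiv \lambda \vec{y}_0 \pmod{p\lat}$ for any $\lambda \in \Z_p^*$. Arguing by contradiction, write $\vec{y}_i = \lambda \vec{y}_0 + p\vec{w}$ with $\lambda \in \{-(p-1)/2,\ldots,(p-1)/2\}$ and $\vec{w} \in \lat$. The sub-case $\vec{w} = \vec0$ forces $\lambda = \pm 1$ by primitivity of both vectors, contradicting $\vec{y}_i \neq \pm \vec{y}_0$. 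In the sub-case $\vec{w} \neq \vec0$, I would try to exhibit many distinct primitive lattice vectors of length at most $\|\vec{y}_i\|$ inside the two-dimensional sublattice $\Z\vec{y}_0 + \Z\vec{w} \subseteq \lat$---on the order of $p/\log p$ of them---via a dyadic or chain construction that sweeps over scaled integer combinations $j\vec{y}_0 + k\vec{w}$ and extracts primitive representatives, thereby contradicting $\xi(\lat,\|\vec{y}_i\|) \leq p/(20\log p)$. This counting step is the main obstacle: the logarithmic slack in the hypothesis suggests that the sharp bound on primitive companions per residue class is $\Omega(p/\log p)$ rather than a constant, and making this estimate tight while respecting primitivity and the length bound is where the bulk of the technical work lies; the remainder of the proof is linear algebra over $\Z_p$ combined with a union bound.
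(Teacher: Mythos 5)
This theorem is imported verbatim from~\cite[Theorem 4.1]{DGStoSVP} and the present paper gives no proof of it, so there is no internal proof to compare against; I will therefore evaluate your attempt on its own terms.

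Your scaffolding is correct and is the natural way to attack the statement. The upper bound is immediate once you observe that primitivity of $\vec{y}_0$ makes $\basis^{-1}\vec{y}_0 \bmod p$ a nonzero vector in $\Z_p^n$, so $\Pr[A_0] = 1/p$ exactly. The lower bound via $\Pr[A_0] - \sum_i \Pr[A_0 \cap A_i]$ is also right, as is the dichotomy: if $\basis^{-1}\vec{y}_0$ and $\basis^{-1}\vec{y}_i$ reduce to linearly independent vectors mod $p$ then $\Pr[A_0 \cap A_i] = 1/p^2$, and if they reduce to collinear nonzero vectors then $A_0 = A_i$ and the target probability collapses to $0$, which (since the hypothesis forces the number of distinct $\vec{y}_i$'s to be below $p/(20\log p) < p$, hence $1/p - N/p^2 > 0$ after deduplication) the theorem statement must implicitly exclude. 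So the entire content of the theorem is the counting lemma: two primitive vectors of $\lat$ collinear modulo $p\lat$ but not equal up to sign force more than $p/(20\log p)$ short primitive vectors.

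That counting lemma is exactly what you do not prove, and you say so yourself. This is a genuine gap, not a cosmetic one---it is the whole theorem. The sketch you give (``exhibit many primitive vectors in $\Z\vec{y}_0 + \Z\vec{w}$ via a dyadic or chain construction'') has two unresolved technical issues. First, primitivity in the rank-$2$ sublattice $\Z\vec{y}_0 + \Z\vec{w}$ is not the same as primitivity in $\lat$; to count toward $\xi(\lat,\cdot)$ you must work in the saturation $\lat \cap \spn(\vec{y}_0, \vec{w})$, which $\Z\vec{y}_0 + \Z\vec{w}$ may properly contain. Second, even in that saturated rank-$2$ lattice $M$, although $\det(M) \leq \|\vec{y}_0\|\|\vec{y}_1\|/p \leq r^2/p$ gives roughly $\Omega(p)$ lattice points of $M$ in $rB_2^n$, a large fraction of those can be multiples of a single very short vector, and a naive count of primitive ones only yields $\Omega(\sqrt{p})$; the logarithmic loss down to $\Omega(p/\log p)$ comes from a genuinely delicate estimate that separates scales. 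So your plan identifies the right reduction and the right object, but the step you label ``the heart of the proof'' is indeed the heart, and it is not carried out.
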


From this, we can immediately derive the following proposition, which is a slight variant of~\cite[Proposition 4.2]{DGStoSVP}.

\begin{proposition}
\label{prop:sparsifier}
There is an efficient (randomized) algorithm that takes as input a basis $\basis$ for a lattice $\lat \subset \R^n$ and a prime $p \geq 101$ 
and outputs a full-rank sublattice $\lat' \subseteq \lat$ such that for any $r_1, r_2$, with $\lambda_1(\lat) \leq r_1 \leq r_2 < p\lambda_1(\lat)$ and $\xi(\lat, r_2) \leq p/(20 \log p)$, we have
\[
\Pr[\lambda_1(\lat') \leq r_1 \mathrm{\ and\ } \lambda_2(\lat') > r_2] \geq \frac{\xi(\lat, r_1)}{p} \cdot \Big(1 - \frac{\xi(\lat, r_2)}{p}
\Big)
\; . 
\]
\end{proposition}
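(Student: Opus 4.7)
The plan is to instantiate the algorithm with the obvious sparsification map and then apply Theorem~\ref{thm:sparsification}. Specifically, on input $\basis$ and $p$, the algorithm samples $\vec{z} \unif \Z_p^n$ and outputs a basis for $\lat' := \{\vec{y} \in \lat : \inner{\vec{z}, \basis^{-1}\vec{y}} \equiv 0 \bmod p\}$; this is efficient, and since $p \lat \subseteq \lat'$, the output is full-rank.

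Next, I would pick a set $S_2$ containing exactly one of $\vec{y}, -\vec{y}$ from each pair of primitive lattice vectors of length at most $r_2$, and let $S_1 \subseteq S_2$ consist of those representatives of length at most $r_1$, so $|S_i| = \xi(\lat, r_i)$. For each $\vec{y}_0 \in S_1$, define the event $E_{\vec{y}_0}$ that $\vec{y}_0 \in \lat'$ and every other $\vec{y} \in S_2 \setminus \{\vec{y}_0\}$ satisfies $\vec{y} \notin \lat'$. These events are pairwise disjoint, because for distinct $\vec{y}_0, \vec{y}_0' \in S_1 \subseteq S_2$ the event $E_{\vec{y}_0}$ forces $\vec{y}_0' \notin \lat'$ while $E_{\vec{y}_0'}$ forces $\vec{y}_0' \in \lat'$. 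Applying Theorem~\ref{thm:sparsification} with $\vec{y}_0$ and the $N = \xi(\lat,r_2)-1$ vectors of $S_2 \setminus \{\vec{y}_0\}$ (the hypothesis $\xi(\lat, \length{\vec{y}_i}) \leq p/(20\log p)$ holds by monotonicity and our assumption on $r_2$) yields $\Pr[E_{\vec{y}_0}] \geq \tfrac{1}{p} - \tfrac{\xi(\lat,r_2)-1}{p^2} \geq \tfrac{1}{p}\bigl(1 - \tfrac{\xi(\lat,r_2)}{p}\bigr)$. Summing over $\vec{y}_0 \in S_1$ by disjointness then produces the claimed bound.

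The main obstacle, and the step that really uses $r_2 < p\lambda_1(\lat)$, is showing that $E_{\vec{y}_0}$ implies both $\lambda_1(\lat') \leq r_1$ and $\lambda_2(\lat') > r_2$. The first is immediate from $\vec{y}_0 \in \lat'$ with $\length{\vec{y}_0} \leq r_1$. For the second, suppose toward contradiction that $\vec{x} \in \lat'$ is linearly independent from $\vec{y}_0$ with $\length{\vec{x}} \leq r_2$. If $\vec{x}$ is primitive in $\lat$, then $\pm \vec{x} \in S_2$ and $\vec{x} \neq \pm \vec{y}_0$, directly violating $E_{\vec{y}_0}$. Otherwise $\vec{x} = k \vec{y}$ with $k \geq 2$ and $\vec{y} \in \lat$ primitive; if $\gcd(k,p)=1$, then invertibility of $k$ mod $p$ forces $\vec{y} \in \lat'$, making $\vec{y}$ a primitive vector in $\lat'$ of length $\leq r_2/k \leq r_2$, so $E_{\vec{y}_0}$ forces $\vec{y} = \pm \vec{y}_0$ and hence $\vec{x} \in \Z \vec{y}_0$, contradicting linear independence; if instead $p \mid k$, then $\length{\vec{x}} \geq p \length{\vec{y}} \geq p \lambda_1(\lat) > r_2$, the final contradiction. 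This last case is precisely where the strict bound $r_2 < p\lambda_1(\lat)$ is used, and is the subtle point that keeps non-primitive contaminants out of $\lat'$ in the relevant range.
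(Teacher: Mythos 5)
Your proof is correct and follows essentially the same route as the paper: the same sparsified sublattice, the same decomposition of the good event into disjoint events indexed by a fixed short primitive vector $\vec{y}_0$, and the same application of Theorem~\ref{thm:sparsification} followed by a sum over disjoint events. The paper states the key implication (that the good sparsification event is equivalent to exactly one $\vec{y}_i$ with $i \le \xi(\lat, r_1)$ surviving) as a one-line assertion with a parenthetical explaining the role of $r_2 < p\lambda_1(\lat)$; your case analysis ($\vec{x}$ primitive, $\vec{x} = k\vec{y}$ with $p \nmid k$, $\vec{x} = k\vec{y}$ with $p \mid k$) is precisely the verification the paper omits, and it is correct.
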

\begin{proof}
The algorithm takes as input a basis $\basis$ for a lattice $\lat \subset \R^n$. It then samples $\vec{z} \in \Z_p^n$ uniformly at random and outputs 
\[
\lat' := \{ \vec{y} \in \lat \ : \ \inner{\vec{z}, \basis^{-1}\vec{y}} \equiv 0 \bmod p \}
\; .
\]

Let $N := \xi(\lat, r_2) \leq p/(20 \log p)$, and let $\vec{y}_1,\ldots, \vec{y}_{N} \in \lat$ be the $N$ unique primitive vectors in $\lat$ satisfying $\| \vec{y}_1 \| \leq \cdots \leq \|\vec{y}_N\| \leq r_2$ (taking only one vector from each pair $\pm \vec{y}$). 
Note that we have $\lambda_1(\lat') \leq r_1$ and $\lambda_2(\lat') > r_2$ if and only if $\vec{y}_i \in \lat'$ for some $i \leq \xi(\lat, r_1)$ and $\vec{y}_j \notin \lat'$ for all $j \neq i$. (Here, we have used the fact that $r_2 < p\lambda_1(\lat)$ to guarantee that vectors of the form $p\vec{y}_i \in \lat'$ do not cause $\lambda_2(\lat')$ to be less than $r_2$.) 

Applying Theorem~\ref{thm:sparsification}, we see that this happens with probability at least $1/p - (N-1)/p^2 > 1/p - N/p^2$ for any fixed $i$. The result follows by noting that these are disjoint events, so that the probability that at least one of these events occurs is the sum of their individual probabilities, which is at least
\[
\xi(\lat, r_1) \cdot \Big(\frac{1}{p} - \frac{N}{p^2} \Big) = \frac{\xi(\lat, r_1)}{p} \cdot \Big(1 - \frac{\xi(\lat, r_2)}{p} \Big)
\; ,
\]
as needed.
\end{proof}

\subsection{The reduction}

We now present the main step in our search-to-decision reduction for SVP.

\begin{theorem}
	\label{thm:SVP}
	For any $\gamma = \gamma(n) \geq 1$ and $a = a(n) \geq \log(n+1)$, there is a dimension-preserving (randomized) reduction from $\gamma^{n/a}$-SVP to $\gamma$-uSVP that runs in time $2^{O(a)} \cdot \gamma^{O(n)}$.
\end{theorem}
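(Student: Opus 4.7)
The strategy is to use lattice sparsification (Proposition~\ref{prop:sparsifier}) to produce, with non-trivial probability, a sublattice $\lat' \subseteq \lat$ that is a valid instance of $\gamma$-uSVP and whose shortest vector has length at most $\gamma^{n/a} \lambda_1(\lat)$. Calling the oracle on such a $\lat'$ immediately yields a nonzero vector in $\lat' \subseteq \lat$ of length at most $\gamma^{n/a} \lambda_1(\lat)$, so sufficient repetition solves $\gamma^{n/a}$-SVP.

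\textbf{A good sparsification radius.} Set $\lambda := \lambda_1(\lat)$, $m := \lceil n/a \rceil$, $r_k := \gamma^{k}\lambda$, and $N_k := \xi(\lat, r_k)$ for $k = 0, \ldots, m$. Then $N_0 \geq 1$ (the shortest nonzero vector is primitive), and Theorem~\ref{thm:lat-pt-bnd} gives $N_m \leq (4\gamma^{m})^{n}$. Since $\prod_{k<m} N_{k+1}/N_k = N_m/N_0$, a pigeonhole argument yields some $k^* < m$ with
$N_{k^*+1}/N_{k^*} \leq (4\gamma^{m})^{n/m} \leq 4^{a}\gamma^{n}$,
using $n/m \leq a$; moreover $r_{k^*} \leq \gamma^{m-1}\lambda \leq \gamma^{n/a}\lambda$. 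Invoking Proposition~\ref{prop:sparsifier} with a prime $p = \Theta(N_{k^*+1}\log N_{k^*+1})$ chosen so that $N_{k^*+1} \leq p/(20\log p)$, the resulting sublattice $\lat'$ satisfies $\lambda_1(\lat') \leq r_{k^*}$ and $\lambda_2(\lat') > \gamma r_{k^*}$ with probability
\[
\frac{N_{k^*}}{p}\Big(1 - \frac{N_{k^*+1}}{p}\Big) \;=\; \Omega\!\left(\frac{1}{4^{a}\,\gamma^{n}\,\log p}\right).
\]
When this event holds, $\lat'$ is a legal $\gamma$-uSVP instance and the oracle must return a nonzero vector of $\lat'$ of length $\lambda_1(\lat') \leq \gamma^{n/a}\lambda$.

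\textbf{Unknown parameters and amplification.} The reduction does not know $\lambda$, $k^*$, or the correct magnitude of $p$ in advance, so I would handle this by brute-force enumeration. First use LLL (Theorem~\ref{thm:LLL}) to get an upper bound $U \in [\lambda, 2^{n}\lambda]$. Then iterate over $k \in \{0,\ldots,m\}$, guesses $\tilde\lambda \in \{U\cdot 2^{-i} : i = 0, \ldots, n+1\}$ (one of which is within a factor of two of $\lambda$), and primes $p \approx 2^{j}$ for $j = 1, \ldots, O(n\log\gamma + a)$; this covers the right combination up to constant factors. For each combination, rerun the sparsification $\Theta(4^{a}\gamma^{n}\log p)$ times, calling the oracle on each $\lat'$; boost the success probability on the correct combination to $1-o(1)$. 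The reduction finally outputs the shortest nonzero vector that any oracle call returned. Because every such vector lies in $\lat' \subseteq \lat$ and at least one call succeeds with overwhelming probability and returns a vector of length $\leq \gamma^{n/a}\lambda$, the output is a valid $\gamma^{n/a}$-SVP solution. The total number of oracle calls is $2^{O(a)}\gamma^{O(n)}$; here the hypothesis $a \geq \log(n+1)$ is precisely what absorbs the polynomial-in-$n$ slack (namely $\log p$ and the number of enumerated triples) into $2^{O(a)}$.

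\textbf{Main obstacle.} The delicate part is the calibration: choosing the prime $p$ large enough to satisfy the hypothesis of Theorem~\ref{thm:sparsification} yet small enough that the per-trial success probability remains $\Omega(1/(4^{a}\gamma^{n}\log p))$ and the enumeration of primes is small, so that the pigeonhole ratio, the sparsification probability, the enumeration count, and the prime magnitude all stay within the target budget $2^{O(a)}\gamma^{O(n)}$. One also needs to check that a uSVP oracle behaving arbitrarily on instances violating its promise cannot spoil the minimum over all outputs---which is immediate, since every output is a verifiable lattice vector and the correct call dominates.
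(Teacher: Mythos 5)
Your proof is correct and follows the same approach as the paper: pigeonhole on the primitive-vector-counting function $\xi$ to find a scale at which the ratio $\xi(\lat,\gamma r)/\xi(\lat,r)$ is $2^{O(a)}\gamma^{O(n)}$, sparsify with a prime of the right magnitude via Proposition~\ref{prop:sparsifier}, and call the $\gamma$-uSVP oracle on the resulting sublattice, whose $\lambda_1$ is at most $\gamma^{n/a}\lambda_1(\lat)$ when sparsification succeeds. The paper parameterizes the pigeonhole dually---it defines $r_i$ as the smallest radius with $\xi(\lat,r_i)\geq k^i$ and looks for a gap $r_{j+1}/r_j > \gamma$, rather than fixing radii $\gamma^k\lambda$ and bounding the count ratios as you do---but this is the same argument up to a change of variables. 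One advantage of the paper's parameterization is that no estimate of $\lambda_1(\lat)$ is needed, so your enumeration over $\tilde\lambda$ and $k^*$ is superfluous: only the prime $p$ actually enters the reduction, and the paper simply tries a fixed list of $\lfloor n/a\rfloor+1$ primes $p_i \approx k^{i+1}$ (with $k = 2^{O(a)}$) in each of its $2^{O(a)}$ independent trials. Two smaller remarks. First, you are actually more careful than the paper on one point: you inflate $p$ to $\Theta(N_{k^*+1}\log N_{k^*+1})$ so that the hypothesis $\xi(\lat,r_2)\leq p/(20\log p)$ of Proposition~\ref{prop:sparsifier} genuinely holds, paying a $\log p = O(n)$ factor that $a\geq\log(n+1)$ absorbs, whereas the paper's choice $p_j\in(2k^{j+1},4k^{j+1})$ does not quite satisfy this hypothesis as written (though it is easily repaired exactly as you do). Second, you should state explicitly the WLOG assumption $\gamma^n = 2^{O(a)}$ that the paper makes; it is what guarantees $\gamma r_{k^*} < p\,\lambda_1(\lat)$ and keeps all the relevant quantities (in particular the range of primes, which should be $\log p = O(n)$ rather than $O(n\log\gamma + a)$) within budget.
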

\begin{proof}
	We may assume without loss of generality that $a \leq n/2$, since the result is trivial for larger $a$. (There are known $2^{O(n)}$-time algorithms for SVP~\cite{AKS01,ADRS15}.) We may also assume that $2^a > \gamma^n$, since this does not affect the asymptotic running time.  Let $k := \ceil{4^{ a n/(n-a)}} = 2^{O(a)}$.
	
	On input a lattice $\lat \subset \R^n$, the reduction does the following $k$ times. For $i = 0, \ldots, \ell := \floor{n/a}$, let $p_i$ be a prime with $2 k^{i+1} < p_i < 4 k^{i+1}$. The reduction calls the procedure from Proposition~\ref{prop:sparsifier} with input $\lat$ and $p_i$, receiving as output $\lat_i$. It then calls its uSVP oracle on each $\lat_i$, receiving as output $\vec{x}_i$. Finally, it simply outputs a shortest non-zero $\vec{x}_i$.
	
	It is clear that the reduction runs in time $\poly(n) \cdot k = 2^{O(a)}$, as needed.\footnote{The reader might notice that the theorem quotes a running time of $2^{O(a)} \cdot \gamma^{O(n)}$ instead of just $2^{O(a)}$. Note that this looser bound on the running time is exactly what allowed us to assume $2^a > \gamma^n$ above. Equivalently, we could simply require $a > n \log \gamma$ in the theorem statement and achieve a running time of $2^{O(a)}$. But, we wish to avoid misunderstanding by explicitly stating that the running time is at least $\gamma^{O(n)}$ in the theorem statement.} For each $i$, let $r_i$ be minimal such that $\xi(\lat, r_i) \geq k^i$. In particular, $r_0 = \lambda_1(\lat)$. And, recalling the definition of $\xi$, we have
\begin{align*}
	|\lat \cap r_\ell B_2^n| &> 2 \xi(\lat, r_\ell) 
	\geq 2k^\ell 
	> 2\cdot (4^{ a n/(n-a)})^{n/a-1} 
	= 2 \cdot 4^n
	\; .
\end{align*}
	So, applying Theorem~\ref{thm:lat-pt-bnd}, we have that $r_\ell/r_0 = r_\ell/\lambda_1(\lat) > 2$. 
	
	Therefore, there exists an $i$ such that $r_{i+1}/r_i > 2^{1/\ell} \geq 2^{a/n} > \gamma$. Let $j$ be minimal such that $r_{j+1}/r_j > \gamma$. In particular, this means that $\xi(\lat, \gamma r_j) < k^{j+1}$ and $\gamma r_{j} \leq 2\gamma  \lambda_1(\lat) < p_j\lambda_1(\lat)$. So, we may apply Proposition~\ref{prop:sparsifier} to obtain
	\begin{align*}
	\Pr[\lambda_1(\lat_j) \leq r_j \mathrm{\ and\ } \lambda_2(\lat_j) > \gamma r_j] &\geq \frac{\xi(\lat, r_j)}{p_j} - \frac{\xi(\lat, r_j) \xi(\lat, \gamma r_j)}{p_j^2}\\
	&> \frac{k^j}{p_j} \cdot \Big(1 - \frac{k^{j+1}}{p_j} \Big)\\
	&> \frac{1}{2k}
	\; .
	\end{align*}
	Therefore, after running the above procedure $k$ times, the algorithm will output a non-zero vector of length at most $r_i$ with at least some positive constant probability.
	
	Finally, by the definition of $r_j$, we have $ r_j/\lambda_1(\lat) = r_j/r_0 \leq \gamma^j \leq \gamma^\ell \leq \gamma^{n/a}$. Therefore, the algorithm outputs a $\gamma^{n/a}$-approximate shortest vector with at least constant probability, as needed.
\end{proof}

\subsection{Corollaries}
\label{sec:cors}

From this, we derive some immediate corollaries. The first is our main SVP result.

\begin{proof}[Proof of Theorem~\ref{thm:SVPintro}]
We may assume without loss of generality that $\gamma \leq 2$, since otherwise the result is trivial as there are known $2^{O(n)}$-time algorithms for SVP. Therefore, by Theorem~\ref{thm:uSVPtoGapSVP}, there is an efficient dimension-preserving reduction from $\gamma$-uSVP to $\gamma$-GapSVP. The result then follows from Theorem~\ref{thm:SVP}.
\end{proof}

By combining Theorem~\ref{thm:SVP} with Corollary~\ref{cor:cvptosvp}, we obtain the following reduction from $\sqrt{n}$-CVP to $\gamma$-uSVP (and therefore $\gamma$-GapSVP).

\begin{corollary}
\label{cor:CVPtoGapSVP}
For any $\gamma = \gamma(n) \geq 1$, there is a dimension-preserving (randomized) reduction from $\sqrt{n}$-CVP to $\gamma$-uSVP that runs in time $\poly(n) \cdot \gamma^{O(n)}$. 

Similarly, there is a dimension-preserving (randomized) reduction from $\sqrt{n}$-CVP to $\gamma$-GapSVP with the same running time.
\end{corollary}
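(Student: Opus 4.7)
The plan is to simply compose the two main ingredients already in hand. The dimension-preserving reduction of Corollary~\ref{cor:cvptosvp} reduces $\sqrt{n}$-CVP to $\sqrt{2}$-SVP, so it suffices to give a dimension-preserving reduction from $\sqrt{2}$-SVP to $\gamma$-uSVP running in time $\poly(n) \cdot \gamma^{O(n)}$. For this, I would invoke Theorem~\ref{thm:SVP} with a carefully chosen parameter $a$.

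Specifically, Theorem~\ref{thm:SVP} gives a dimension-preserving reduction from $\gamma^{n/a}$-SVP to $\gamma$-uSVP in time $2^{O(a)} \cdot \gamma^{O(n)}$, subject to $a \geq \log(n+1)$. To land on an approximation factor of at most $\sqrt{2}$, I set $a := \max\bigl(\lceil \log(n+1)\rceil, \lceil 2n \log \gamma \rceil\bigr)$, so that $\gamma^{n/a} \leq 2^{1/2} = \sqrt{2}$. Since $2^{2n\log \gamma} = \gamma^{2n}$, the resulting time bound is $2^{O(a)} \cdot \gamma^{O(n)} = \poly(n) \cdot \gamma^{O(n)}$, as desired. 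Chaining with Corollary~\ref{cor:cvptosvp} (which is itself efficient and dimension-preserving) yields the first statement.

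For the second statement, I would further compose with Theorem~\ref{thm:uSVPtoGapSVP}, which gives an efficient dimension-preserving reduction from $\gamma$-uSVP to $\gamma$-GapSVP whenever $1 \leq \gamma \leq \poly(n)$. The only minor nuisance is the boundary regime where $\gamma$ is large: if $\gamma \geq 2$ (or more generally $\gamma^{O(n)} \geq 2^{\Omega(n)}$), then the claimed running time already permits us to solve CVP directly using a known $2^{O(n)}$-time CVP algorithm (e.g., via the reduction to SVP and~\cite{ADRS15}), so the reduction is trivial; similarly, if $\gamma > \poly(n)$ we simply truncate $\gamma$ down to $\poly(n)$ before applying Theorem~\ref{thm:uSVPtoGapSVP}, which only makes the hypothesis on the oracle weaker.

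There is really no technical obstacle here beyond bookkeeping, since all three component reductions are known and each is dimension-preserving; the only step that warrants a line of care is the choice of $a$ ensuring $\gamma^{n/a} \leq \sqrt{2}$ while keeping $2^{O(a)}$ absorbed into $\poly(n)\cdot \gamma^{O(n)}$.
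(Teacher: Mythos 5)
Your proof is correct and follows essentially the same route as the paper: choose $a$ so that $\gamma^{n/a}\leq\sqrt{2}$ while keeping $2^{O(a)}$ within $\poly(n)\cdot\gamma^{O(n)}$ (the paper uses $a:=2n\log\gamma+\log(n+1)$ rather than your max, but these are interchangeable up to constants in the exponent), then chain with Corollary~\ref{cor:cvptosvp} and, for the second statement, Theorem~\ref{thm:uSVPtoGapSVP}. Your extra care about the regimes $\gamma\geq 2$ and $\gamma>\poly(n)$ is a reasonable bit of bookkeeping that the paper leaves implicit, but it doesn't change the substance of the argument.
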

\begin{proof}
Setting $a := 2n \log(\gamma)+\log (n+1)$ in Theorem~\ref{thm:SVP} gives a reduction from $\sqrt{2}$-SVP to $\gamma$-uSVP with the claimed running time. The first result then follows from Corollary~\ref{cor:cvptosvp}. The second result follows from Theorem~\ref{thm:uSVPtoGapSVP}.
\end{proof}

We also obtain an alternative proof of the hardness of $(1+1/\poly(n))$-uSVP, as originally shown by Aggarwal and Dubey~\cite{AD13}.

\begin{corollary}
\label{cor:uSVPhard}
For any constant $\eps > 0$, $(1+1/n^\eps)$-uSVP is NP-hard (under randomized reductions).
\end{corollary}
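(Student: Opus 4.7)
The plan is to derive this corollary as an immediate consequence of Theorem~\ref{thm:SVP} (our reduction from $\gamma^{n/a}$-SVP to $\gamma$-uSVP) composed with Khot's Theorem~\ref{thm:SVPhard} (NP-hardness of $\gamma_0$-SVP under randomized reductions for any constant $\gamma_0 \geq 1$).

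Fix any constant $\eps > 0$, and let $\gamma := 1 + 1/n^\eps$. The strategy is to choose $a = a(n)$ in Theorem~\ref{thm:SVP} so that the source problem $\gamma^{n/a}$-SVP is an NP-hard problem covered by Khot. The key calculation is the elementary estimate $\ln(1+1/n^\eps) = \Theta(1/n^\eps)$, from which it follows that $\gamma^{n/a} = \exp\bigl((n/a)\ln\gamma\bigr) = \exp\bigl(\Theta(n^{1-\eps}/a)\bigr)$. Therefore, setting $a := \lceil c \cdot n^{1-\eps}\rceil$ for a small enough positive constant $c$ (depending only on the target constant $\gamma_0$), we obtain $\gamma^{n/a} \geq \gamma_0$ for all sufficiently large $n$.

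With this choice, Theorem~\ref{thm:SVP} yields a dimension-preserving randomized reduction from $\gamma_0$-SVP to $(1+1/n^\eps)$-uSVP running in time $2^{O(a)} \cdot \gamma^{O(n)} = 2^{O(n^{1-\eps})}$. Composing this with Khot's reduction in Theorem~\ref{thm:SVPhard} produces a randomized reduction from SAT to $(1+1/n^\eps)$-uSVP, which gives the desired NP-hardness.

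The main point to verify carefully is the running time of the composed reduction as a function of the input size. For $\eps \geq 1$, the quantity $n^{1-\eps}$ is bounded, so $2^{O(n^{1-\eps})}$ is polynomial in $n$ and the reduction is polynomial-time in the input size, yielding NP-hardness directly. For $\eps \in (0,1)$ the bound $2^{O(n^{1-\eps})}$ is formally super-polynomial in the dimension $n$, and this is the step I anticipate being the main obstacle; obtaining a polynomial-time reduction across the full range $\eps > 0$ will require either a padding trick (to inflate the input size relative to the dimension so that $2^{O(n^{1-\eps})}$ becomes polynomial in the input length), or chaining through an intermediate SVP hardness with a smaller approximation factor so that $a$ can be kept close to $\log(n+1)$. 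Either way, the core of the argument is the parameter choice above, and the NP-hardness of $(1+1/n^\eps)$-uSVP then follows.
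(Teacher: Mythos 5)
Your plan for $\eps \geq 1$ is essentially the paper's: choose $a$ so that $\gamma^{n/a}$ is bounded by a constant and Theorem~\ref{thm:SVP} runs in polynomial time, then invoke Khot's hardness result (Theorem~\ref{thm:SVPhard}). Two small issues in this part: the inequality you want is $\gamma^{n/a} \leq \gamma_0$, not $\gamma^{n/a} \geq \gamma_0$, so that $\gamma^{n/a}$-SVP inherits NP-hardness from $\gamma_0$-SVP; and Theorem~\ref{thm:SVP} requires $a \geq \log(n+1)$, which $a := \lceil c\, n^{1-\eps}\rceil$ fails once $\eps \geq 1$ (take $a := \max(\lceil c\, n^{1-\eps}\rceil, \lceil\log(n+1)\rceil)$ or, as the paper does, $a := n\log\gamma + \log(n+1)$). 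With your choice of $a$ the quantity $\gamma^{n/a}$ is in fact bounded above by a constant, so the argument goes through, but the stated inequality is backwards.

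The genuine gap is the case $\eps \in (0,1)$, which you flag but do not resolve, and neither of your suggested fixes works. Padding the input's bit length cannot help: the reduction must write out the padding, and $2^{\Omega(n^{1-\eps})}$ bits already exceeds polynomial time. And no choice of $a$ repairs the direct approach, because the $\gamma^{O(n)} = 2^{\Omega(n^{1-\eps})}$ factor in the running time of Theorem~\ref{thm:SVP} is present regardless of $a$. The paper instead first proves NP-hardness of $(1+1/n)$-uSVP as above (the $\eps = 1$ case, where everything is polynomial), and then gives a polynomial-time reduction from $(1+1/n)$-uSVP in dimension $n$ to $(1+1/N^\eps)$-uSVP in the larger dimension $N := \lceil n^{1/\eps}\rceil$ by padding the \emph{lattice dimension}: append $N-n$ orthogonal vectors of length $r := 3\|\vec{b}_1\| > 2\lambda_1(\lat)$. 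This preserves $\lambda_1$ and the shortest vector, and the $(1+1/n)$-uSVP promise carries over to the $(1+1/N^\eps)$-uSVP promise since $N^\eps \geq n$. That second step, performed on the uSVP side after hardness of $(1+1/n)$-uSVP is established, is the ingredient your argument is missing.
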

\begin{proof}
For $\gamma \leq 1+ O(\log n/n)$, taking $a := n \log(\gamma) + \log(n+1)$ in Theorem~\ref{thm:SVP} gives a polynomial-time reduction from $2$-SVP to $\gamma$-uSVP. It then follows from Theorem~\ref{thm:SVPhard} that $\gamma$-uSVP is NP-hard (under randomized reductions).

The full result then follows by noting that there is a simple reduction from $(1+1/n^\eps)$-uSVP to $(1+1/n)$-uSVP for any constant $\eps \in (0,1)$. In particular, given input $\lat \subset \R^n$ with basis $\basis := (\vec{b}_1,\ldots, \vec{b}_n)$, let $N := \ceil{n^{1/\eps}} = \poly(n)$, and let $r := 3 \|\vec{b}_1\| > 2 \lambda_1(\lat)$. Let $\lat' := \lat(\vec{b}_1,\ldots, \vec{b}_n, r\vec{e}_{n+1}, \ldots, r\vec{e}_{N}) \subset \R^{N}$ be the rank $N$ lattice obtained by \scarequotes{adding $N -n$ perpendicular vectors of length $r$ to $\lat$.} The result follows by noting that $N^\eps \geq n$ so that $\lat'$ is a valid instance of $(1+1/N^\eps)$-uSVP if $\lat$ is a valid instance of $(1+1/n)$-uSVP, and the two instances have the same solution.
\end{proof}

Finally, we note that a reduction to GapSVP immediately implies a reduction to GapCVP, by Theorem~\ref{thm:SVPtoCVP}. We will need this in the next section.

\begin{corollary}
\label{cor:SVPtoGapCVP}
For any $\gamma = \gamma(n) \geq 1$ and $a = a(n) \geq \log(n+1)$, there is a dimension-preserving (randomized) reduction from $\gamma^{n/a}$-SVP to $\gamma$-GapCVP that runs in time $2^{O(a)} \cdot \gamma^{O(n)}$.
\end{corollary}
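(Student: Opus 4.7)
The plan is simply to compose two reductions that are already established. By Theorem~\ref{thm:SVPintro}, for the given parameters $\gamma$ and $a$, there is a dimension-preserving randomized reduction $\mathcal{R}_1$ from $\gamma^{n/a}$-SVP to $\gamma$-GapSVP running in time $2^{O(a)} \cdot \gamma^{O(n)}$. By Theorem~\ref{thm:SVPtoCVP}, there is an efficient dimension-preserving reduction $\mathcal{R}_2$ from $\gamma$-GapSVP to $\gamma$-GapCVP. The desired reduction is just $\mathcal{R}_1$ with every query to its $\gamma$-GapSVP oracle answered by running $\mathcal{R}_2$ against the available $\gamma$-GapCVP oracle.

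Verifying the parameters: both $\mathcal{R}_1$ and $\mathcal{R}_2$ are dimension-preserving, so the composition is dimension-preserving. The approximation factor on the search side remains $\gamma^{n/a}$ and on the decision side remains $\gamma$, since $\mathcal{R}_2$ preserves the approximation factor exactly. For the running time, $\mathcal{R}_1$ makes at most $2^{O(a)} \cdot \gamma^{O(n)}$ oracle calls and otherwise runs in polynomial time, while each invocation of $\mathcal{R}_2$ costs only $\poly(n)$. Hence the overall running time is $2^{O(a)} \cdot \gamma^{O(n)}$, as claimed.

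There is essentially no obstacle here; the only thing to notice is that no logical dependency is circular (Theorem~\ref{thm:SVPintro} was established via Theorem~\ref{thm:SVP} and Theorem~\ref{thm:uSVPtoGapSVP}, neither of which uses any CVP result), and that the composition preserves the dimension-preserving property of each piece.
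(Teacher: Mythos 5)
Your proof is correct and is exactly the paper's argument: Corollary~\ref{cor:SVPtoGapCVP} is proved there by composing Theorem~\ref{thm:SVPintro} with the GMSS reduction of Theorem~\ref{thm:SVPtoCVP}. The parameter and running-time bookkeeping you give is the same (and the non-circularity remark, while not in the paper, is a reasonable sanity check).
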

\begin{proof}
Combine Theorem~\ref{thm:SVPintro} with Theorem~\ref{thm:SVPtoCVP}.
\end{proof}

\section{Reducing CVP to GapCVP}
\label{sec:CVP}

\begin{theorem}
\label{thm:CVP}
For any $\gamma = \gamma(n) \geq 1$, $h = h(n) \geq 1$, and integer $\ell = \ell(n) \geq 1$, there is a (deterministic) algorithm with access to a $\gamma$-GapCVP oracle and a $h$-SVP oracle that solves $\gamma^{n/\ell}$-CVP in time $( \poly(n) \cdot  h)^\ell$. Furthermore, the dimension of the algorithm's oracle calls never exceeds the dimension of the input lattice.
\end{theorem}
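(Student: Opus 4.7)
The plan is to implement a ``guided'' variant of Babai's nearest-hyperplane algorithm that recurses $\lceil n/\ell\rceil$ times, peeling off $\ell$ dimensions at each recursive step. At each step the $h$-SVP oracle (applied to the dual lattice) keeps the enumeration of codimension-$\ell$ cosets bounded, and the $\gamma$-GapCVP oracle is used to greedily commit to the single coset with smallest estimated distance.

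In detail, the recursive step on a rank-$m$ sublattice $\lat_0$ with target $\vec{t}_0$, setting $k := \min(\ell, m)$, proceeds as follows. (i) Produce $k$ short, linearly independent dual vectors $\vec{v}_1, \ldots, \vec{v}_k \in \lat_0^*$ by iterating $h$-SVP: call $h$-SVP on $\lat_0^*$ to obtain a primitive $\vec{v}_1$, extend it to a dual basis, pass to the compatible primal basis $(\vec{w}_1, \ldots, \vec{w}_m)$ of $\lat_0$, and recursively apply the same extraction inside the rank-$(m{-}1)$ sublattice $\lat(\vec{w}_2, \ldots, \vec{w}_m)$---arranging that $\vec{w}_j^* = \vec{v}_j$ for $j \leq k$. (ii) For each of the $(O(hm))^k$ integer tuples $(i_1, \ldots, i_k)$ with $|i_j - \langle \vec{v}_j, \vec{t}_0\rangle| \leq O(hm)$, invoke the $\gamma$-GapCVP oracle on the rank-$(m{-}k)$ lattice $\lat(\vec{w}_{k+1}, \ldots, \vec{w}_m)$ with shifted target $\vec{t}_0 - \sum_j i_j \vec{w}_j$ to obtain a distance estimate. (iii) Pick the tuple with smallest estimate and recurse on the corresponding coset. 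The base case $m = 0$ returns $\vec{0}$.

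Correctness follows by induction on the recursion depth. Since $\vec{y}^{\mathrm{opt}} - \vec{t}_0$ has norm at most $\mu(\lat_0)$, the coordinates $a_j = \langle \vec{v}_j, \vec{y}^{\mathrm{opt}}\rangle$ of the true closest vector $\vec{y}^{\mathrm{opt}} = \sum_j a_j \vec{w}_j$ satisfy $|a_j - \langle \vec{v}_j, \vec{t}_0\rangle| \leq \|\vec{v}_j\| \cdot \mu(\lat_0)$, which is $O(hm)$ for our choice of $\vec{v}_j$ (by Lemma~\ref{lem:coveringlambda1}, as discussed below). So the tuple corresponding to $\vec{y}^{\mathrm{opt}}$ lies inside the enumeration range, and its true coset-distance equals $\dist(\vec{t}_0, \lat_0)$. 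The $\gamma$-GapCVP estimate for this tuple is hence at most $\gamma \cdot \dist(\vec{t}_0, \lat_0)$, so the greedily chosen coset has true distance at most $\gamma \cdot \dist(\vec{t}_0, \lat_0)$, costing one factor of $\gamma$ per level. After $\lceil n/\ell\rceil$ levels the approximation factor is $\gamma^{\lceil n/\ell\rceil}$. For running time, each level uses $O(\ell)$ $h$-SVP calls and $(O(hn))^\ell$ $\gamma$-GapCVP calls in dimension $\leq n$, giving per-level cost $(\poly(n) \cdot h)^\ell$, and $\lceil n/\ell\rceil$ such levels still fit within $(\poly(n) \cdot h)^\ell$.

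The main obstacle is step (i): producing $\ell$ short, linearly independent dual vectors via the $h$-SVP oracle while keeping each oracle call in dimension $\leq n$ \emph{and} making the resulting coordinate-range product in step (ii) at most $(O(hn))^\ell$. The recursive extraction above uses $\ell$ oracle calls and structurally produces $\ell$ such vectors; bounding the $\|\vec{v}_j\|$'s (and hence the coordinate ranges) then reduces to iterating Lemma~\ref{lem:coveringlambda1} along the chain of sublattices produced by the extraction, taking care at each basis-extension step to control the new $\vec{v}_j$'s norm relative to the appropriate successive minimum of $\lat_0^*$.
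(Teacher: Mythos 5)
Your high-level plan---peel off $\ell$ coordinates per super-step, commit greedily to the coset with the smallest $\gamma$-GapCVP estimate, accumulate one factor of $\gamma$ per super-step for $\gamma^{n/\ell}$ over $\lceil n/\ell\rceil$ super-steps---is exactly the paper's. The paper's presentation is slightly different (it first writes the $\ell = 1$ greedy algorithm and then, for general $\ell$, ``tries all recursive paths up to depth $\ell$''), but that tree exploration is the same object as your enumeration of codimension-$\ell$ cosets.

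However, your step (ii) has a concrete gap. You enumerate a \emph{box}: tuples $(i_1, \ldots, i_k)$ with $|i_j - \inner{\vec{v}_j, \vec{t}_0}| \leq O(hm)$ for all $j$, and you justify the width by $|a_j - \inner{\vec{v}_j, \vec{t}_0}| \leq \|\vec{v}_j\| \cdot \mu(\lat_0) = O(hm)$. For this you need $\vec{v}_j = \vec{w}_j^*$ to be a dual basis vector of $\lat_0$ (otherwise $a_j = \inner{\vec{v}_j, \vec{y}^{\mathrm{opt}}}$ need not be an integer), and you need $\|\vec{w}_j^*\| \leq O(h) \cdot \lambda_1(\lat_0^*)$ for \emph{every} $j \leq k$ so that Lemma~\ref{lem:coveringlambda1} gives $O(hm)$. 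That last bound is false for $j \geq 2$: the ratio $\lambda_2(\lat_0^*)/\lambda_1(\lat_0^*)$ is unbounded (e.g.\ $\lat_0 = \lat(\vec{e}_1, \alpha\vec{e}_2, \vec{e}_3,\ldots,\vec{e}_m)$ with $\alpha$ huge gives $\lambda_1(\lat_0^*) = 1/\alpha$ and $\lambda_2(\lat_0^*) = 1$, while $\mu(\lat_0) \approx \alpha/2$), so no choice of basis extension in step (i) can make $\|\vec{w}_2^*\| \cdot \mu(\lat_0)$ polynomially bounded. Your closing paragraph gestures at bounding $\|\vec{v}_j\|$ against ``the appropriate successive minimum of $\lat_0^*$,'' but that cannot rescue the box: $\lambda_j(\lat_0^*) \cdot \mu(\lat_0)$ has no $\poly(n)$ bound for $j \geq 2$.

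The fix---which is what the paper actually does---is to make the enumeration \emph{nested}, not a box. Having found $\vec{v}_1$ and committed to $i_1$, project the target onto $\spn(\lat_1)$ where $\lat_1 := \{\vec{y} \in \lat_0 : \inner{\vec{v}_1, \vec{y}} = 0\}$; then take $\vec{v}_2$ to be a short vector in $\lat_1^*$ (the projected dual, \emph{not} $\lat_0^*$), and center the range for $i_2$ at the inner product of $\vec{v}_2$ with this \emph{running} projected target. The width bound then comes from $\|\vec{v}_2\| \cdot \dist(\vec{t}_1, \lat_1) \leq h\, \lambda_1(\lat_1^*)\, \mu(\lat_1) \leq hm/2$---i.e.\ you apply Corollary~\ref{cor:hyperplanes} (Banaszczyk) afresh to $\lat_1$, not to $\lat_0$. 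The chain $\lat_0 \supset \lat_1 \supset \cdots \supset \lat_k$ (and hence the $\ell$ SVP calls) can indeed be built once per super-step since it does not depend on the target shift, which recovers your $O(\ell)$-SVP-call count; but the $O(hm)$-wide ranges must be recentered at each nesting level. The total leaf count is still $(O(hm))^k$, and the rest of your analysis then goes through.
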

\begin{proof}
We show how to handle the case $\ell =1$ and then describe how to extend the result to arbitrary $\ell$. On input a lattice $\lat \subset \R^n$ and $\vec{t} \in \R^n$, the algorithm behaves as follows. If $n = 1$, then it solves the CVP instance directly. Otherwise, it first uses 
its SVP oracle to find a dual vector $\vec{b}_1^* \in \lat^*$ with $\length{\vec{b}_1^*} \leq h \cdot \lambda_1(\lat^*)$. Let $\vec{b}_2^*, \ldots, \vec{b}_n^* \in \lat^*$ such that $(\vec{b}_1^*,\ldots, \vec{b}_n^*)$ is a basis of $\lat^*$, and let $(\vec{b}_1,\ldots, \vec{b}_n) \in \lat$ be the associated basis of the primal. (This is always possible if $\vec{b}_1^*$ is primitive in $\lat^*$. If $\vec{b}_1^*$ is not primitive, then we can simply replace it with a primitive vector that is a scalar multiple of $\vec{b}_1^*$.)

Next, let $a := \inner{\vec{b}_1^*, \vec{t}}$ and $\lat' := \lat(\vec{b}_2,\ldots, \vec{b}_n)$. Then, for $i = \floor{a - h \cdot n}, \ldots, \ceil{a + h \cdot n}$, the algorithm uses its GapCVP oracle to compute $d_i$ such that $\dist(\vec{t} - i \cdot \vec{b}_1, \lat') \leq d_i \leq \gamma \cdot \dist(\vec{t} - i \cdot \vec{b}_1, \lat')$. The algorithm then picks an index $i$ such that $d_i$ is minimal and calls itself recursively on input $\lat'$ and $\vec{t} - i \cdot \vec{b}_1$, receiving as output $\vec{y} \in \lat'$. Finally, it outputs $\vec{y} + i \vec{b}_1 \in \lat$.

It is clear that the running time is as claimed. By Corollary~\ref{cor:hyperplanes}, there must exist some $j$ such that $\dist(\vec{t} - j \cdot \vec{b}_1, \lat') = \dist(\vec{t}, \lat)$, so that $d_j \leq \gamma \dist(\vec{t}, \lat)$. Therefore, $d_i \leq \gamma \cdot \dist(\vec{t}, \lat)$, and $\dist(\vec{t} - i \cdot \vec{b}_1, \lat') \leq \gamma \cdot \dist(\vec{t}, \lat)$. The result then follows from induction on the dimension $n$.

To handle arbitrary $\ell \geq 1$, the algorithm simply tries all recursive paths up to depth $\ell$ and chooses the path that yields the lowest approximate distance according to its $\gamma$-GapCVP oracle. Note that there are at most $(2hn+2)^\ell = ( \poly(n) \cdot  h)^\ell$ such paths, so the running time is as claimed.
\end{proof}

We obtain our main CVP reduction by combining the above result with our SVP reduction.

\begin{proof}[Proof of Theorem~\ref{thm:CVPintro}]
We may assume without loss of generality that $\ell$ is an integer. 

We can instantiate the SVP oracle required in Theorem~\ref{thm:CVP} above by using Corollary~\ref{cor:SVPtoGapCVP}. In particular, taking $a := n \log \gamma + \log(n+1)$ in Corollary~\ref{cor:CVPtoGapSVP} gives a reduction from $2$-SVP to $\gamma$-GapCVP that runs in time $\poly(n) \cdot \gamma^{O(n)}$. By using this reduction to instantiate the $h$-SVP oracle in Theorem~\ref{thm:CVP} with $h = 2$, we get a dimension-preserving reduction from $\gamma^{n/\ell}$-CVP to $\gamma$-CVP that runs in time $n^{O(\ell)} \cdot \gamma^{O(n)}$, as needed.
\end{proof}

\section{Deterministic reductions}
\label{sec:worse}

We now show deterministic search-to-decision reductions for SVP and CVP that achieve significantly worse parameters. Both reductions use the same basic idea, which is essentially to \scarequotes{find the coordinates of a short (or close) lattice point bit-by-bit.}

\subsection{The deterministic CVP reduction}

We present the CVP reduction first because it is simpler.

\begin{proof}[Proof of Theorem~\ref{thm:CVPdet}]
We may assume without loss of generality that $p$ is an integer and $\gamma^{2} < p$, since the result is trivial for larger $\gamma$.

On input a lattice $\lat \subset \R^n$ and $\vec{t} \in \R^n$, the reduction behaves as follows. If $n = 1$, then it solves the CVP instance directly. Otherwise, it first uses the procedure from Theorem~\ref{thm:LLL}, to compute  an LLL-reduced basis $\basis = (\vec{b}_1,\ldots, \vec{b}_n)$ for $\lat$. It then finds the $n$th coordinate of a close lattice vector to $\vec{t}$ \scarequotes{in base $p$,} as follows. Let $\vec{t}_0 = \vec{t}$, and let $\lat_i := \lat(\vec{b}_1,\ldots, \vec{b}_{n-1}, p^{i} \cdot \vec{b}_n)$ for all $i$. 
For $i = 0,\ldots, \ell-1$, with $\ell \geq 1$ to be set in the analysis, the reduction uses its $\gamma$-GapCVP oracle to compute $d_{i,0}, \ldots, d_{i,p-1}$ such that $\dist(\vec{t}_i- j p^{i} \cdot \vec{b}_n, \lat_{i+1}) \leq  d_{i,j} \leq \gamma \dist(\vec{t}_i- j p^{i} \cdot \vec{b}_n, \lat_{i+1})$.
It then sets $\vec{t}_{i+1} = \vec{t}_{i} - j \cdot p^{i} \cdot \vec{b}_n$, where $j$ is chosen such that $d_{i,j}$ is minimal. 

Let 
\[
\vec{t}' :=  \vec{t}_\ell - p^\ell \cdot  \left\lfloor \frac{\inner{\vec{t}_\ell, \gs{\vec{b}}_n} }{p^{\ell} \cdot \| \gs{\vec{b}}_n\|^2} \right\rceil  \cdot \vec{b}_n
\; .
\] The reduction then calls itself recursively on input $\lat' := \lat(\vec{b}_1,\ldots, \vec{b}_{n-1})$ and $\vec{t}'$, receiving as output $\vec{y}' \in \lat'$. Finally, the reduction outputs $\vec{y}' + \vec{t} - \vec{t}' \in \lat$.

Take 
\[
\ell := \left\lceil  \frac{n + \log n+2}{2 \log (p/\gamma)} \right\rceil = O(n/\log p)
\; .
\] 
It is clear that the running time is as claimed. We first show by induction that $\dist(\vec{t}_i, \lat_i) \leq  \gamma^i \cdot \dist(\vec{t}, \lat)$. For $i = 0$, this is trivial. For any $i > 0$, let $\vec{x}$ be a closest vector in $\lat_{i-1}$ to $\vec{t}_{i-1}$, and assume for induction that $\| \vec{x} - \vec{t}_{i-1}\| \leq \gamma^{i-1} \dist(\vec{t}, \lat)$. Note that we can write $\vec{x}  = \vec{x}' + c p^{i-1} \cdot \vec{b}_n$, where $\vec{x}' \in \lat_{i}$ and $c \in \{ 0,\ldots, p-1 \}$. In particular, 
\[ 
d_{i, c} \leq \gamma \dist(\vec{t}_{i-1}- c p^{i-1} \cdot \vec{b}_n, \lat_i) = \gamma \| \vec{x} - \vec{t}_{i-1}\| \leq \gamma^{i} \dist(\vec{t}, \lat)
\; .
\]
It follows from the definition of $\vec{t}_i$ that $\dist(\vec{t}_i, \lat_i) \leq  d_{i, c} \leq \gamma^i \dist(\vec{t}, \lat)$, as needed.

We now wish to show that $\dist(\vec{t}', \lat') = \dist(\vec{t}_\ell, \lat_\ell)$. Suppose not. Then, clearly we have that $\dist(\vec{t}_\ell, \lat_\ell) \geq p^{\ell} \|\gs{\vec{b}}_n\|/2$. (To see this, consider the \scarequotes{distance in the direction of $\gs{\vec{b}}_n$.}) Combining this with the above, we have $\dist(\vec{t}, \lat) \geq (p/\gamma)^\ell \cdot \|\gs{\vec{b}}_n\|/2 \geq \sqrt{n} 2^{n/2}\cdot \|\gs{\vec{b}}_n\|$, contradicting Lemma~\ref{lem:babai_covering_radius}.

Combining everything together, we see that $\dist(\vec{t}', \lat') \leq \gamma^{\ell}\cdot\dist(\vec{t}, \lat)$. Finally, we assume for induction that $\|\vec{y}' - \vec{t}'\| \leq \gamma^{\ell\cdot(n-1)} \dist(\vec{t}', \vec{\lat}') \leq \gamma^{\ell \cdot n} \dist(\vec{t}, \lat)$. It follows that $\|(\vec{y}' - \vec{t}' + \vec{t}) - \vec{t}\| = \|\vec{y}' - \vec{t}'\| \leq \gamma^{\ell n} \dist(\vec{t}, \lat) = \gamma^{O(n^2/\log p)} \dist(\vec{t}, \lat)$, as needed.
\end{proof}

\subsection{The deterministic SVP reduction}

The SVP reduction uses essentially the same idea, but it is a bit more technical because the GapSVP oracle is so much weaker. The reduction is very similar to the reduction from unique SVP to GapSVP in~\cite{LM09}.

\begin{proof}[Proof of Theorem~\ref{thm:SVPdet}]
We may assume without loss of generality that $p$ is a prime and $\gamma^{3} < p$, since the result is trivial for larger $\gamma$.
On input a lattice $\lat \subset \R^n$, the reduction behaves as follows. If $n = 1$, it solves the SVP instance directly.  Otherwise, let $(\vec{b}_1,\ldots, \vec{b}_n)$ be an LLL-reduced basis for $\lat$ (which we can compute efficiently by Theorem~\ref{thm:LLL}).

Our goal is to compute a sequence of sublattices $\lat = \lat^{(0)} \subset \lat^{(1)} \subset \cdots \subset \lat^{(\ell)}$, with $\ell \geq 1$ to be set in the analysis, such that the index of $\lat^{(i+1)}$ over $\lat^{(i)}$ is $p$, and $\lambda_1(\lat^{(i+1)}) \leq \gamma \cdot \lambda_1(\lat^{(i)})$. 
In particular, we will take $\lat^{(i)} := \lat(\vec{b}_1,\ldots, \vec{b}_{n-2}, a_1^{(i)} \vec{b}_{n-1} + a_2^{(i)} \vec{b}_{n}, a_3^{(i)} \vec{b}_n)$ for some $a_{1}^{(i)}, a_{2}^{(i)}, a_{3}^{(i)}$, starting with $a_1^{(0)} := 1$, $a_2^{(0)} := 0$, and $a_3^{(0)} := 1$. To compute the remaining coefficients, the reduction behaves as follows for $i = 0,\ldots, \ell - 1$. For $j = 0,\ldots, p-1$, let $\lat_{i,j} := \lat(\vec{b}_1,\ldots, \vec{b}_{n-2}, a_1^{(i)} \vec{b}_{n-1} + a_2^{(i)} \vec{b}_{n} + j a_3^{(i)} \vec{b}_n, pa_3^{(i)} \vec{b}_n)$ and let $\lat_{i,p} := \lat(\vec{b}_1,\ldots, \vec{b}_{n-2}, p \cdot (a_1^{(i)} \vec{b}_{n-1} + a_2^{(i)} \vec{b}_{n}), a_3^{(i)} \vec{b}_n)$. For each $j$, the reduction uses its GapSVP oracle to compute $d_{i,j}$ such that $\lambda_1(\lat_{i,j}) \leq d_{i,j} \leq \gamma \lambda_1(\lat_{i,j})$. Let $j$ such that $d_{i,j}$ is minimal. The reduction then sets the coefficients so that $\lat^{(i+1)} := \lat_{i,j}$.\footnote{I.e., if $j < p$, the reduction sets $a_1^{(i+1)} := a_1^{(i)}$, $a_2^{(i+1)} := a_2^{(i)} + j a_3^{(i)}$, and $a_3^{(i+1)} := p a_3^{(i)}$. Otherwise, it sets $a_1^{(i+1)} := p a_1^{(i)}$, $a_2^{(i+1)} := a_2^{(i)}$, and $a_3^{(i+1)} := a_3^{(i)}$.}

Let $k_1$ be the largest power of $p$ that divides $a_1^{(\ell)}$, and let $k_2$ be the largest power of $p$ that divides $a_3^{(\ell)}$. If $k_1 \geq k_2$, the reduction sets $\lat' := \lat(\vec{b}_1,\ldots, \vec{b}_{n-2}, \vec{b}_{n})$ to be \scarequotes{$\lat$ without $\vec{b}_{n-1}$.} Otherwise, it sets $\lat' := \lat(\vec{b}_1,\ldots, \vec{b}_{n-2}, a_1^{(\ell)}\vec{b}_{n-1}+ a_2^{(\ell)} \vec{b}_n)$ to be \scarequotes{$\lat^{(\ell)}$ without $\vec{b}_n$.} It then calls itself recursively on $\lat'$ and returns the result.

Take 
\[
\ell := \left\lceil  \frac{n+3}{\log (p/\gamma^2)} \right\rceil = O(n/\log p)
\; .
\] 
The running time is clear. We first show that $\lambda_1(\lat^{(i+1)}) \leq \gamma \cdot \lambda_1(\lat^{(i)})$ for all $i$. Indeed, let $\vec{v} \in \lat^{(i)}$ such that $\| \vec{v} \| = \lambda_1(\lat^{(i)})$. As in the previous proof, it suffices to observe that $\vec{v} \in \lat_{i,c}$ for some $c$, as this will imply that 
\[
\lambda_1(\lat^{(i+1)}) \leq \min_i d_{i,j} \leq d_{i,c} \leq \gamma \lambda_1(\lat_{i,c}) = \gamma \|\vec{v}\| = \gamma \lambda_1(\lat^{(i)})
\; ,
\] as needed. To see this, note that we can write $\vec{v} = \sum_{i=1}^{n-2} r_i \vec{b}_i + r_{n-1}\cdot (a_1^{(i)} \vec{b}_{n-1} + a_2^{(i)} \vec{b}_{n}) + r_n \cdot a_3^{(i)} \vec{b}_n$ where $r_i \in \Z$. If $r_{n-1} \equiv 0 \bmod p$, then clearly $\vec{v} \in \lat_{i,p}$. Otherwise, there is a $c \in \{0,\ldots, p-1\}$ such that $c r_{n-1} \equiv r_n \bmod p$. Then, 
\begin{align*}
\vec{v} &= \sum_{i=1}^{n-2} r_i \vec{b}_i + r_{n-1}\cdot (a_1^{(i)} \vec{b}_{n-1} + a_2^{(i)} \vec{b}_{n}) + r_n a_3^{(i)} \vec{b}_n\\
&= \sum_{i=1}^{n-2} r_i \vec{b}_i + r_{n-1}\cdot (a_1^{(i)} \vec{b}_{n-1} + a_2^{(i)} \vec{b}_{n} + c a_3^{(i)} \vec{b}_n) + (r_n - c r_{n-1}) a_3^{(i)}\vec{b}_{n}
\; .
\end{align*}
Note that by definition $r_n - c r_{n-1} \equiv 0 \bmod p$, so it follows that $\vec{v} \in \lat_{i,c}$, as needed.

In particular, $\lambda_1(\lat^{(\ell)}) \leq \gamma^{\ell}\lambda_1(\lat)$. Now, we claim that $\lambda_1(\lat') \leq \lambda_1(\lat^{(\ell)})$. Note that any point $\vec{y} = \sum a_i \vec{b}_i \in \lat^{(\ell)} \setminus \lat'$ has either $|a_n| \geq p^{\max(k_1, k_2)} \geq p^{\ell/2}$ or $|a_{n-1}| \geq p^{\ell/2}$ (depending on whether or not $k_1 \geq k_2$). But, by Lemma~\ref{lem:LLLcoordinates}, this implies that $\|\vec{y}\| \geq 2^{-n/2 - 1} \cdot p^{\ell/2} \cdot \lambda_1(\lat) > \gamma^{\ell} \lambda_1(\lat)$. Therefore, any vector in $\lat^{(\ell)}$ of length at most $\lambda_1(\lat^{(\ell)}) \leq \gamma^\ell \cdot \lambda_1(\lat)$ must be in $\lat'$, as needed.

Finally, as in the previous proof, we can show by a simple induction argument that the output vector has length at most $\gamma^{\ell (n-1)} \lambda_1(\lat') \leq \gamma^{\ell n} \lambda_1(\lat) = \gamma^{O(n^2/\log p)} \cdot \lambda_1(\lat)$, as needed.

\end{proof}

\bibliographystyle{alpha}

\begin{thebibliography}{GMSS99}

\bibitem[AD15]{AD13}
Divesh Aggarwal and Chandan Dubey.
\newblock Improved hardness results for unique {S}hortest {V}ector {P}roblem,
  2015.
\newblock \url{http://eccc.hpi-web.de/report/2013/076/}.

\bibitem[ADRS15]{ADRS15}
Divesh Aggarwal, Daniel Dadush, Oded Regev, and Noah Stephens{-}Davidowitz.
\newblock Solving the {S}hortest {V}ector {P}roblem in $2^n$ time via discrete
  {G}aussian sampling.
\newblock In {\em {STOC}}, 2015.

\bibitem[Ajt96]{Ajt96}
Mikl{\'o}s Ajtai.
\newblock Generating hard instances of lattice problems.
\newblock In {\em {STOC}}, 1996.

\bibitem[Ajt98]{Ajtai-SVP-hard}
Mikl\'{o}s Ajtai.
\newblock The {S}hortest {V}ector {P}roblem in {L2} is {NP}-hard for randomized
  reductions.
\newblock In {\em STOC}, 1998.

\bibitem[AKS01]{AKS01}
Mikl\'{o}s Ajtai, Ravi Kumar, and D.~Sivakumar.
\newblock A sieve algorithm for the shortest lattice vector problem.
\newblock In {\em {STOC}}, 2001.

\bibitem[AR05]{AharonovR04}
Dorit Aharonov and Oded Regev.
\newblock Lattice problems in {NP} intersect {coNP}.
\newblock {\em Journal of the ACM}, 52(5):749--765, 2005.
\newblock Preliminary version in FOCS'04.

\bibitem[Bab86]{Bab86}
L.~Babai.
\newblock On {L}ov\'asz' lattice reduction and the nearest lattice point
  problem.
\newblock {\em Combinatorica}, 6(1):1--13, 1986.

\bibitem[Ban93]{banaszczyk}
W.~Banaszczyk.
\newblock New bounds in some transference theorems in the geometry of numbers.
\newblock {\em Mathematische Annalen}, 296(4):625--635, 1993.

\bibitem[BHW93]{BHW93}
U.~Betke, M.~Henk, and J.M. Wills.
\newblock Successive-minima-type inequalities.
\newblock {\em Discrete \& Computational Geometry}, 9(1):165--175, 1993.

\bibitem[BV11]{BV11}
Zvika Brakerski and Vinod Vaikuntanathan.
\newblock Efficient fully homomorphic encryption from (standard) {LWE}.
\newblock In {\em {FOCS}}. 2011.

\bibitem[BV14]{BV14}
Zvika Brakerski and Vinod Vaikuntanathan.
\newblock Lattice-based {FHE} as secure as {PKE}.
\newblock In {\em ITCS}, 2014.

\bibitem[BWS16]{BWS16}
Shi Bai, Weiqiang Wen, and Damien Stehl{\'e}.
\newblock Improved reduction from the {B}ounded {D}istance {D}ecoding problem
  to the unique {S}hortest {V}ector {P}roblem in lattices.
\newblock In {\em {ICALP}}, 2016.

\bibitem[Che13]{Cheng13}
Kuan Cheng.
\newblock Some complexity results and bit unpredictable for {S}hort {V}ector
  {P}roblem.
\newblock Cryptology ePrint Archive, Report 2013/052, 2013.
\newblock \url{http://eprint.iacr.org/2013/052}.

\bibitem[CN99]{CN99}
Jin-Yi Cai and Ajay Nerurkar.
\newblock Approximating the {SVP} to within a factor (1+1/dim$^\varepsilon$) is
  {NP}-hard under randomized reductions.
\newblock {\em Journal of Computer and System Sciences}, 59(2):221 -- 239,
  1999.

\bibitem[DK13]{DadushK13}
Daniel Dadush and Gabor Kun.
\newblock Lattice sparsification and the approximate {C}losest {V}ector
  {P}roblem.
\newblock In {\em SODA}, 2013.

\bibitem[DPV11]{DPV11}
Daniel Dadush, Chris Peikert, and Santosh Vempala.
\newblock Enumerative lattice algorithms in any norm via {M}-ellipsoid
  coverings.
\newblock In {\em {FOCS}}, 2011.

\bibitem[DRS14]{cvpp}
Daniel Dadush, Oded Regev, and Noah Stephens{-}Davidowitz.
\newblock On the {C}losest {V}ector {P}roblem with a distance guarantee.
\newblock In {\em {CCC}}, 2014.

\bibitem[Gen09]{Gen09}
Craig Gentry.
\newblock Fully homomorphic encryption using ideal lattices.
\newblock In {\em {STOC}}, 2009.

\bibitem[GMSS99]{GMSS99}
Oded Goldreich, Daniele Micciancio, Shmuel Safra, and Jean{-}Paul Seifert.
\newblock Approximating shortest lattice vectors is not harder than
  approximating closest lattice vectors.
\newblock {\em Information Processing Letters}, 71(2):55 -- 61, 1999.

\bibitem[GPV08]{GPV08}
Craig Gentry, Chris Peikert, and Vinod Vaikuntanathan.
\newblock Trapdoors for hard lattices and new cryptographic constructions.
\newblock In {\em STOC}, 2008.

\bibitem[HP14]{HP2014}
Gengran Hu and Yanbin Pan.
\newblock Improvements on reductions among different variants of {SVP} and
  {CVP}.
\newblock In {\em Information Security Applications: 14th International
  Workshop, WISA}, 2014.

\bibitem[HR12]{HRsvp}
Ishay Haviv and Oded Regev.
\newblock Tensor-based hardness of the {S}hortest {V}ector {P}roblem to within
  almost polynomial factors.
\newblock {\em Theory of Computing}, 8(23):513--531, 2012.
\newblock Preliminary version in STOC'07.

\bibitem[JS98]{JS98}
Antoine Joux and Jacques Stern.
\newblock Lattice reduction: A toolbox for the cryptanalyst.
\newblock {\em Journal of Cryptology}, 11(3):161--185, 1998.

\bibitem[Kan87]{Kan87}
Ravi Kannan.
\newblock Minkowski's convex body theorem and integer programming.
\newblock {\em Mathematics of Operations Research}, 12(3):pp. 415--440, 1987.

\bibitem[Kho05]{Khot05svp}
Subhash Khot.
\newblock Hardness of approximating the {S}hortest {V}ector {P}roblem in
  lattices.
\newblock {\em Journal of the ACM}, 52(5):789--808, September 2005.
\newblock Preliminary version in FOCS'04.

\bibitem[KL78]{KL78}
G.~A. Kabatjanski{\u\i} and V.~I. Leven{\v{s}}te{\u\i}n.
\newblock Bounds for packings on the sphere and in space.
\newblock {\em Problemy Pereda\v ci Informacii}, 14(1):3--25, 1978.

\bibitem[KS01]{KS01}
S.~Ravi Kumar and D.~Sivakumar.
\newblock On the unique shortest lattice vector problem.
\newblock {\em Theoretical Computer Science}, 255(1‚Äì2):641 -- 648, 2001.

\bibitem[Len83]{Len83}
Hendrik~W Lenstra{ }Jr.
\newblock Integer programming with a fixed number of variables.
\newblock {\em Mathematics of operations research}, 8(4):538--548, 1983.

\bibitem[LLL82]{LLL82}
A.~K. Lenstra, H.~W. Lenstra, Jr., and L.~Lov{\'a}sz.
\newblock Factoring polynomials with rational coefficients.
\newblock {\em Math. Ann.}, 261(4):515--534, 1982.

\bibitem[LLM06]{LiuLM06}
Yi-Kai Liu, Vadim Lyubashevsky, and Daniele Micciancio.
\newblock On {B}ounded {D}istance {D}ecoding for general lattices.
\newblock In {\em {RANDOM}}, 2006.

\bibitem[LM09]{LM09}
Vadim Lyubashevsky and Daniele Micciancio.
\newblock On {B}ounded {D}istance {D}ecoding, unique shortest vectors, and the
  minimum distance problem.
\newblock In {\em {CRYPTO}}, 2009.

\bibitem[LPR10]{LPR10}
Vadim Lyubashevsky, Chris Peikert, and Oded Regev.
\newblock On ideal lattices and {L}earning with {E}rrors over rings.
\newblock In {\em {EUROCRYPT}}, 2010.

\bibitem[MG02]{MicciancioBook}
Daniele Micciancio and Shafi Goldwasser.
\newblock {\em Complexity of Lattice Problems: a cryptographic perspective},
  volume 671 of {\em The Kluwer International Series in Engineering and
  Computer Science}.
\newblock Kluwer Academic Publishers, Boston, Massachusetts, March 2002.

\bibitem[Mic01]{Mic01svp}
Daniele Micciancio.
\newblock The {S}hortest {V}ector {P}roblem is {NP}-hard to approximate to
  within some constant.
\newblock {\em SIAM Journal on Computing}, 30(6):2008--2035, March 2001.
\newblock Preliminary version in FOCS 1998.

\bibitem[Mic08]{Micciancio08}
Daniele Micciancio.
\newblock Efficient reductions among lattice problems.
\newblock In {\em {SODA}}, 2008.

\bibitem[MR07]{MR07}
Daniele Micciancio and Oded Regev.
\newblock Worst-case to average-case reductions based on {G}aussian measures.
\newblock {\em SIAM Journal on Computing}, 37(1):267--302, 2007.

\bibitem[NS01]{NS01}
Phong~Q Nguyen and Jacques Stern.
\newblock The two faces of lattices in cryptology.
\newblock In {\em Cryptography and lattices}, pages 146--180. Springer, 2001.

\bibitem[Odl90]{Odl90}
Andrew~M Odlyzko.
\newblock The rise and fall of knapsack cryptosystems.
\newblock {\em Cryptology and computational number theory}, 42:75--88, 1990.

\bibitem[Pei09]{Peikert09}
Chris Peikert.
\newblock Public-key cryptosystems from the worst-case {S}hortest {V}ector
  {P}roblem.
\newblock In {\em STOC}, 2009.

\bibitem[PR07]{PR07}
Chris Peikert and Alon Rosen.
\newblock Lattices that admit logarithmic worst-case to average-case connection
  factors.
\newblock In {\em {STOC}}, 2007.

\bibitem[PS09]{PS09}
Xavier Pujol and Damien Stehl{\'e}.
\newblock Solving the shortest lattice vector problem in time $2^{2.465 n}$.
\newblock {\em IACR Cryptology ePrint Archive}, 2009:605, 2009.

\bibitem[Reg09]{Reg09}
Oded Regev.
\newblock On lattices, {L}earning with {E}rrors, random linear codes, and
  cryptography.
\newblock {\em Journal of the ACM}, 56(6):Art. 34, 40, 2009.

\bibitem[Ste15]{latticereductions}
Noah Stephens{-}Davidowitz.
\newblock Dimension-preserving reductions between lattice problems.
\newblock \url{http://noahsd.com/latticeproblems.pdf}, 2015.

\bibitem[Ste16]{DGStoSVP}
Noah Stephens{-}Davidowitz.
\newblock Discrete {G}aussian sampling reduces to {CVP} and {SVP}.
\newblock In {\em {SODA}}, 2016.

\end{thebibliography}

\end{document}